\documentclass[a4paper]{llncs} 
\usepackage[utf8]{inputenc}
\usepackage{amssymb,amsmath}
\usepackage{todonotes}
\usepackage{bussproofs}
\usepackage{enumitem}

\renewenvironment{proof}{\begin{pf}}{\qed\end{pf}}

\newcommand{\prefix}{Q_1x_1\ldots Q_nx_n}

\newcommand{\var}{\mathsf{var}}
\newcommand{\poly}{\mathsf{poly}}

\newcommand{\ocef}{\mathrm{OBDD}(\land, \exists, \forall)}

\newtheorem{observation}{Observation}

\title{Proof Complexity of Symbolic QBF Reasoning}
\author{Stefan Mengel\inst{1}\thanks{Partially supported by the PING/ACK project of the French National Agency for Research (ANR-18-CE40-0011).} \and Friedrich Slivovsky\inst{2}\thanks{Supported by the Vienna Science and Technology Fund (WWTF) under grant ICT19-060.}}
\institute{
CNRS, UMR 8188, Centre de Recherche en Informatique de Lens (CRIL), Lens, F-62300, France\\
Univ. Artois, UMR 8188, Lens, F-62300, France
\and TU Wien, Vienna, Austria
	}

\begin{document}

\maketitle

\begin{abstract}
  We introduce and investigate symbolic proof systems for Quantified Boolean Formulas (QBF) operating on Ordered Binary Decision Diagrams (OBDDs).
  These systems capture QBF solvers that perform symbolic quantifier elimination, and as such admit short proofs of formulas of bounded path-width and quantifier complexity.
  As a consequence, we obtain exponential separations from standard clausal proof systems, specifically (long-distance) QU-Resolution and IR-Calc.

  We further develop a lower bound technique for symbolic QBF proof systems based on strategy extraction that lifts known lower bounds from communication complexity.
  This allows us to derive strong lower bounds against symbolic QBF proof systems that are independent of the variable ordering of the underlying OBDDs, and that hold even if the proof system is allowed access to an NP-oracle.
\end{abstract}

\section{Introduction}
Unlike in SAT solving, which is dominated by Conflict-Driven Clause Learning (CDCL), in QBF solving there is no single approach that is clearly dominant in practice. 
Instead, modern solvers are based on variety of techniques, such as (quantified) CDCL~\cite{ZhangM02,LonsingB10,PeitlSS19}, expansion of universal variables~\cite{Biere04,JanotaKMC16,BloemBHELS18}, and abstraction~\cite{RabeT15,JanotaM15,Tentrup16}.

In practice, these techniques turn out to be complementary, each having strengths and weaknesses on different classes of instances~\cite{PulinaT09,HoosPSS18,LonsingE18}.
This complementarity of solvers can be analyzed theoretically by considering proof complexity.
Essentially, the different paradigms used in solvers can be formalized as proof systems for QBF, which then can be analyzed with mathematical methods.
Then, by separating the strength of different proof systems, one can show that the corresponding solvers are unable to solve problems efficiently that can be dealt with by other solvers. This motivation has led to a great interest in QBF proof complexity over the last few years and resulted in a good understanding of common QBF proof systems and how they relate to each other (see~\cite{BeyersdorffCJ19,BeyersdorffBCP20} and the references therein).

In this paper, we focus on a \emph{symbolic} approach to QBF solving that was originally implemented in the {\sc QBDD} system~\cite{PanV04}.
Its underlying idea is to use OBDDs to represent constraints inside the solver, instead of clauses as used by most other SAT and QBF solvers.
We formalize {\sc QBDD} as a proof system in which the lines are OBDDs. More specifically, we consider QBF proof systems that are obtained from propositional OBDD-proof systems by adding $\forall$-reduction (cf.~\cite{BeyersdorffBCP20}).
Propositional proof systems using OBDDs as lines have been studied intensively since the introduction of this model in~\cite{AtseriasKV04}, see e.g.~\cite{BussIKS18}. We thus consider lifting these systems to QBF by adding $\forall$-reduction as very natural.

Analyzing the strength of OBDD-refutations, we first show that, even for a weak propositional system  that allows only conjunction of lines and forgetting of variables, the resulting QBF proof system, which we refer to as $\ocef$ and which corresponds to traces of {\sc QBDD}, $p$-simulates QU-resolution.
We also show that $\ocef$, and in fact also {\sc QBDD}, can make use of structural properties of QBF in the sense that instances of bounded pathwidth and bounded quantifier alternation can be solved efficiently. We do this by using a recent result on variable elimination for OBDDs from~\cite{CapelliM19} to show that the intermediate OBDDs in {\sc QBDD} are not too big in this setting.
We then observe that other QBF proof systems from the literature have hard instances of bounded pathwidth and bounded quantifier alternation. This shows that $\ocef$ can efficiently refute QBFs that are out of reach for many other systems. In particular, it is exponentially separated from (long-distance) QU-resolution~\cite{BalabanovWJ14} and the expansion based IR-calc~\cite{BeyersdorffCJ19}.
It follows that, at least in principle, {\sc QBDD} can solve instances that other, more modern solvers cannot.

The main technical contribution of this work is a lower bound technique for OBDD-refutations.
Here, we consider the strongest possible propositional system, which is semantic entailment of OBDDs.
We first show that this system admits efficient strategy extraction of decision lists whose terms are OBDDs.
Functions that can be succinctly encoded in this way have short protocols in a communication model from~\cite{ImpagliazzoW10} for which it is known that lower bounds can be obtained by proving that a function does not have large monochromatic rectangles.
To the best of our knowledge, such bounds are only known for fixed variable partitions.
To prove lower bounds for OBDD-refutations that are independent of the variable order chosen for the OBDDs, we lift classical bounds on the inner product function to a graph-based generalization which we show has essentially the same properties as the inner product function, but for \emph{all} variable partitions.

\section{Preliminaries}
\subsection{Propositional Logic and Quantified Boolean Formulas}
We assume an infinite set of propositional \emph{variables} and consider \emph{propositional formulas} built up from variables and the constants true ($1$) and false ($0$) using conjunction ($\land$), disjunction ($\lor$), and negation ($\neg$).
We write $\var(\varphi)$ for the set of variables occurring in a formula $\varphi$.
In particular, we are interested in formulas in \emph{conjunctive normal form (CNF)}. A formula is in CNF if it is a conjunction of \emph{clauses}. A clause is a disjunction of \emph{literals}, and a literal is variable $x$ or a negated variable $\neg x$.
An \emph{assignment} of a set $X$ of variables is a mapping $\tau: X \rightarrow \{0, 1\}$ of variables to truth values.
We write $[X]$ for the set of assignments of~$X$.
Given assignments $\tau: X \rightarrow \{0, 1\}$ and $\sigma: Y \rightarrow \{0, 1\}$ such that $X$ and $Y$ are disjoint, we let $\tau \cup \sigma$ denote the assignment of $X \cup Y$ such that $(\tau \cup \sigma)(x) = \tau(x)$ if $x \in X$ and $(\tau \cup \sigma)(x) = \sigma(x)$ if $x \in Y$.
Furthermore, we write $\tau|_{X'}$ for the restriction of $\tau$ to $X' \subseteq X$.
The result of applying an assignment~$\tau$ to formula~$\varphi$ and propagating constants is denoted $\varphi[\tau]$.
If $\varphi[\tau] = 1$ we say that $\tau$ \emph{satisfies} $\varphi$, and if $\varphi[\tau] = 0$, the assignment $\tau$ \emph{falsifies} $\varphi$.
A \emph{Quantified Boolean Formula (QBF)} is a pair $\Phi = \mathcal{Q}.\varphi$ consisting of a \emph{quantifier prefix} $\mathcal{Q}$ and a propositional formula $\varphi$, called the \emph{matrix} of $\Phi$. If the matrix is in CNF, then $\Phi$ is in \emph{Prenex Conjunctive Normal Form (PCNF)}.
The quantifier prefix is a sequence $\mathcal{Q} = Q_1x_1\ldots Q_nx_n$ where the $Q_i \in \{\forall, \exists\}$ are \emph{quantifiers} and the $x_i$ are propositional variables such that $\{x_1, \ldots, x_n\} = \var(\varphi)$.
We write $D_{\Phi}(x_i) = \{x_1, \dots, x_{i-1}\}$ for the set of variables that come before $x_i$ in the quantifier prefix, and say \emph{$x_i$ left of $x_j$} and \emph{$x_j$ is right of $x_i$} if $i < j$.
A variable $x_i$ is \emph{existential} if $Q_i = \exists$, and \emph{universal} if $Q_i = \forall$.
We write $\var_{\exists}(\Phi)$ for the set of existential variables, $\var_{\forall}(\Phi)$ for the set of universal variables, and $\var(\Phi)$ for the set of all variables occurring in $\Phi$.
Let $\Phi$ be a QBF. A \emph{universal strategy} for $\Phi$ is a family $\vec{f} = \{f_u\}_{u \in \var_{\forall}(\Phi)}$ of functions $f_u: [\var(\Phi)] \rightarrow \{0, 1\}$ such that $f_u(\tau) = f_u(\sigma)$ for any assignments $\tau$ and $\sigma$ that agree on $D_{\Phi}(u)$. If $\vec{f}$ is a universal strategy and $\tau: \var_{\exists}(\Phi) \rightarrow \{0, 1\}$ and assignment of existential variables, we write $\tau \cup \vec{f}(\tau)$ for the assignment of $\var(\Phi)$ such that $(\tau \cup \vec{f}(\tau))(x) = \tau(x)$ for existential variables $x \in \var_{\exists}(\Phi)$ and $(\tau \cup \vec{f}(\tau))(u) = f_u(\tau \cup \vec{f}(\tau))$ for universal variables $u \in \var_{\forall}(\Phi)$.
A universal strategy $\vec{f}$ is a \emph{universal winning strategy} for $\Phi$ if $\tau \cup \vec{f}(\tau)$ falsifies the matrix of $\Phi$ for every assignment $\tau$ of the existential variables.
A QBF is \emph{false} if it has a universal winning strategy, and \emph{true} otherwise.
\subsection{Graphs and Pathwidth of Formulas}
Let $G=(V,E)$ a graph and for every set $V'\subseteq V$ let $N[V']$ denote the open neighborhood of $V$, i.e., the set of all vertices in $V\setminus V'$ that have a neighbor in~$V'$. The \emph{expansion} of $G$ is then defined as $\min_{V'\subseteq V, |V'| \le |V|/2} \frac{|N(V')|}{|V'|}$.

A \emph{path decomposition} of a graph $G = (V, E)$ is a pair $(P, \lambda)$ where $P = p_1, \dots, p_n$ is a sequence of \emph{nodes} $p_i$, and $\lambda: \{p_1, \dots, p_n\} \rightarrow 2^V$ maps nodes $p_i$ to subsets $\lambda(p_i) \subseteq V$ of vertices called \emph{bags}, subject to the following constraints:
\begin{enumerate}
\item Each vertex appears in some bag, that is, $V \subseteq \bigcup_{i=1}^n \lambda(p_i)$,
\item For each edge $vw \in E$ there is a node $p_i$ such that $\{v,w\} \subseteq \lambda(p_i)$.
\item If $v \in \lambda(p_i)$ and $v \in \lambda(p_j)$ for $1 \leq i < j \leq n$, then $v \in \bigcap_{k=i}^j \lambda(p_k)$.
\end{enumerate}
The \emph{width} of a path decomposition is $\max_{i=1}^n |\lambda(p_i)| - 1$, and the \emph{pathwidth} of a graph $G$ is the minimum width of any path decomposition of $G$.
The pathwidth of a CNF formula $\varphi$ is the pathwidth of its \emph{primal graph}, which is the graph with vertex set $\var(\varphi)$ and edge set $\{xy\:|\: \exists C \in \varphi$ s.t. $x,y \in \var(C)\}$, and the pathwidth of a PCNF formula is the pathwidth of its matrix.

\subsection{OBDD}

We only give a short introduction into ordered binary decision diagrams (short OBDDs), a classical representation of Boolean functions~\cite{Bryant86}; see~\cite{Wegener00} for a textbook treatment.

Let $X$ be a set of variables and $\pi$ an ordering of~$X$. A $\pi$-OBDD on variables~$X$ is defined to be a directed acyclic graph $B$ with one source $s$ and two sinks labeled~$0$ and~$1$, called the $0$- and $1$-sink respectively. All non-sink nodes are labeled with variables from $X$ such that on every path $P$ in $B$ the variables appear in the order $\pi$. Moreover, all non-sink nodes have two outgoing edges, one labeled with $0$, the other with $1$. The size of $B$, denoted by $|B|$, is defined as the number of nodes in $B$.
Given an assignment $a\in \{0,1\}$, the OBDD $B$ computes a value $B(a)$ as follows: starting in the root, we construct a path by taking for every node $v$ labeled be a variable $x$ the edge labeled with $a(x)$. We continue until we end up in a sink, and the label of the sink is the value of $B$ on $a$ denoted by $B(a)$. This way $B$ computes a Boolean function and every Boolean function can be computed by an OBDD. The OBDD $B$ is called complete if on every source-sink path $P$ all variables in $X$ appear as node labels. The \emph{width} of a complete OBDD $B$ is defined as the maximal number of nodes that are labeled with the same variable.

\begin{observation}
 There is a polynomial time algorithm that, given an OBDD $B$, computes an equivalent complete OBDD $B'$. Moreover, $|B'| \le (|X|+1)|B|$.
\end{observation}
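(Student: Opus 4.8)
The plan is to turn $B$ into a complete OBDD by inserting, for every variable that is skipped along some edge, a ``pass-through'' node that reads the variable but routes both of its outgoing edges to the same successor, so that the computed function is unchanged. The only real difficulty is doing this without blowing up the size: completing each edge with its own private chain of pass-through nodes would introduce up to $|X|$ nodes per edge, hence roughly $2|X|\cdot|B|$ nodes in total, which is worse than the claimed bound. The key idea is therefore to \emph{share} these chains by target node.

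Concretely, I would first assign to each non-sink node the \emph{level} $i$ given by the position of its label in the ordering $\pi$, and assign both sinks level $|X|+1$. An edge of $B$ from a node of level $i$ to a node $w$ of level $j$ is a \emph{gap edge} if $j > i+1$, meaning the variables strictly between positions $i$ and $j$ are skipped; similarly the source may have level $>1$, so the variables above it are skipped on every path. For every node $w$ of $B$ (including the sinks) and every level $i$ with $1 \le i < \mathrm{level}(w)$ that is actually needed, I would create at most one new node $d_{w,i}$, labeled with the $i$-th variable of $\pi$, whose $0$- and $1$-edges both point to $d_{w,i+1}$ (or directly to $w$ when $i+1 = \mathrm{level}(w)$). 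These nodes form a single chain feeding into $w$. Each gap edge originally entering $w$ from a node of level $i$ is then redirected to $d_{w,i+1}$, and an old source of level $j>1$ is replaced by the new source $d_{\mathrm{source},1}$.

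For correctness I would argue three things. First, equivalence: on any input, following a redirected edge into a chain $d_{w,i+1}, \dots$ reads exactly the skipped variables and, because every pass-through node sends both edges onward identically, deterministically arrives at $w$; hence the sink reached on every assignment is the same as in $B$, so $B'$ computes the same function. Second, the ordering is respected because each chain into $w$ reads variables strictly increasing in $\pi$ up to $w$, and every source-sink path now reads all of $X$, so $B'$ is a valid complete $\pi$-OBDD with one source and two sinks. Third, the size bound: each original node $w$ receives at most one chain, contributing at most $\mathrm{level}(w)-1 \le |X|$ new nodes, so over all $|B|$ nodes of $B$ we add at most $|X|\cdot |B|$ nodes and obtain $|B'| \le (|X|+1)|B|$.

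Finally I would observe that the algorithm is clearly polynomial: it makes a single pass over the at most $2|B|$ edges, and for each target node it builds the needed pass-through chain once and reuses it. The step I expect to require the most care is the accounting that yields exactly the factor $|X|+1$ rather than a looser bound, which hinges on sharing one chain per target node and on treating the sinks and the (possibly high-level) source uniformly as targets of such chains; the equivalence and ordering checks are then routine.
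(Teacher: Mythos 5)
Your proposal is correct. The paper actually states this Observation without any proof, treating it as standard background on OBDDs (cf.\ the textbook treatment it cites), and your shared-chain construction is precisely the standard argument that justifies it: inserting pass-through nodes shared per target node, redirecting gap edges into the appropriate chain position, and handling the source and the sinks uniformly gives at most $\mathrm{level}(w)-1 \le |X|$ new nodes per original node $w$, hence $|B'| \le |B| + |X|\cdot|B| = (|X|+1)|B|$, with equivalence and respect of the order $\pi$ immediate from the pass-through property. Your accounting and the observation that per-edge chains would be too wasteful are exactly the right points of care.
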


We will use the following algorithm of OBDDs which is often called the \emph{apply algorithm}.
\begin{lemma}\label{lem:OBDDproperties}
 Let $f:\{0,1\}^2 \rightarrow \{0,1\}$ be a binary Boolean function. Then there is an algorithm that, given two $\pi$-OBDDs $B_1$ and $B_2$, computes in time polynomial in $|B_1|+|B_2|$ a $\pi$-OBDD $B$ such that $B$ computes on input $a\in \{0,1\}^X$ the value $B(a):= f(B_1(a), B_2(a))$. In particular, the size of $B$ is polynomial in that of $B_1$ and $B_2$.
\end{lemma}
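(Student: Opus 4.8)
The plan is to prove Lemma~\ref{lem:OBDDproperties} by the standard \emph{product construction} on OBDDs, building the output diagram $B$ whose nodes are (reachable) pairs of nodes, one from $B_1$ and one from $B_2$. The key idea is that since $B_1$ and $B_2$ respect the \emph{same} variable order $\pi$, we can traverse both diagrams in lockstep: a state of the product is a pair $(v_1, v_2)$ where $v_i$ is a node of $B_i$, and reading a variable advances each component past that variable.

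First I would, by the Observation preceding the lemma, replace $B_1$ and $B_2$ with equivalent \emph{complete} $\pi$-OBDDs at a cost of a factor $(|X|+1)$ in size; this guarantees that on any path both diagrams test the variables in exactly the same positions, so the two components always agree on which variable is read next. Second, I would define the product graph: its nodes are pairs $(v_1, v_2)$, the source is $(s_1, s_2)$, and from a node $(v_1, v_2)$ labeled by the common variable $x$ we add a $b$-edge (for $b \in \{0,1\}$) to $(v_1', v_2')$, where $v_i'$ is the $b$-successor of $v_i$ in $B_i$. Pairs of sinks $(\text{sink}_1, \text{sink}_2)$ carrying values $c_1, c_2$ are collapsed to the sink labeled $f(c_1, c_2)$. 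Third, I would verify the three defining properties: the graph is acyclic because both components strictly descend in the order $\pi$; it is a valid $\pi$-OBDD because every path tests variables in the order $\pi$; and by a straightforward induction on the length of the computation path for a fixed input $a$, the path followed in the product tracks precisely the pair of paths followed in $B_1$ and $B_2$, so it reaches the sink labeled $f(B_1(a), B_2(a))$. Hence $B(a) = f(B_1(a), B_2(a))$ as required.

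For the size and running time bound, I would note that the product has at most $|B_1| \cdot |B_2|$ nodes, so after accounting for the completion step the size is polynomial in $|B_1| + |B_2|$; constructing the reachable part of the product by a breadth-first exploration from the source, creating each pair at most once and processing its two outgoing edges in constant time, runs in time polynomial in $|B_1| + |B_2|$ as well. (If a reduced output is desired one can additionally apply the standard OBDD reduction, which also runs in polynomial time, but this is not needed for the stated bound.)

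The only genuinely delicate point, and the step I expect to require the most care, is ensuring the two traversals stay \emph{synchronized} on the variable being read. In a general (non-complete) OBDD a node may skip over variables on which the function does not depend, so without the completion step the successor variables in $B_1$ and $B_2$ could differ and the naive pairing would be ill-defined. Invoking the Observation to make both diagrams complete is exactly what removes this obstacle, letting each transition consume the same variable in both components and making the product construction clean; this is why I would perform the completion first.
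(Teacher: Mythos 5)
Your proof is correct. Note that the paper itself gives no proof of this lemma: it is stated as the classical \emph{apply algorithm} for OBDDs, with the surrounding text deferring to Bryant's paper and Wegener's textbook, and your product construction is exactly the argument underlying that classical result. The one place you deviate from the textbook version is the pre-processing step: the standard apply algorithm does not complete the diagrams first, but instead handles desynchronization inside the recursion --- at a pair $(v_1,v_2)$ whose labels differ (or where one component is already a sink), it branches on the earlier variable in $\pi$ and leaves the other component in place, which is sound because that component's function does not depend on the skipped variable. Your variant, invoking the Observation to make both OBDDs complete so that the two traversals run in perfect lockstep, is a clean and correct alternative; it costs an extra factor of roughly $(|X|+1)^2$ in the size of the product, which is harmless for the stated polynomial bound (to keep the bound in terms of $|B_1|+|B_2|$ alone, take $X$ to be the set of variables actually labelling nodes of $B_1$ or $B_2$, so that $|X| \le |B_1|+|B_2|$). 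The rest --- acyclicity, the lockstep induction for $B(a)=f(B_1(a),B_2(a))$, the $|B_1'|\cdot|B_2'|$ node bound, and the breadth-first construction of the reachable product --- is exactly right.
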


OBDDs are well-known to be \emph{canonical} in the sense that, for fixed variable order $\pi$, there is a unique representation of any Boolean function $f$ by a $\pi$-OBDD.
\begin{lemma}\label{lem:canonicity}
 Let $f$ be a Boolean function on variables $X$ and let $\pi$ be a variable order of $X$. Then there is a unique $\pi$-OBDD of minimal size (up to isomorphism) computing $f$. Moreover, given a $\pi$-OBDD representing $f$, this unique OBDD can be computed in polynomial time.
 The same is true for complete OBDDs.
\end{lemma}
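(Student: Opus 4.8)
The plan is to prove the result through the notion of a \emph{reduced} $\pi$-OBDD and the classical characterization of its nodes by subfunctions of $f$. Call a $\pi$-OBDD reduced if (i) no two distinct nodes are roots of isomorphic sub-OBDDs, and (ii) no non-sink node has both outgoing edges leading to the same node. First I would observe that any $\pi$-OBDD can be transformed into an equivalent reduced one by repeatedly merging isomorphic nodes (enforcing (i)) and deleting a node violating (ii), redirecting its incoming edges to the common target (enforcing (ii)). Each such operation preserves the computed function and strictly decreases the number of nodes, so the process terminates with a reduced OBDD of no larger size; in particular every $\pi$-OBDD of minimal size is reduced.

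The heart of the argument is to show that a reduced $\pi$-OBDD for $f$ is unique up to isomorphism. To this end, assign to every node $v$ the Boolean function $f_v$ computed by the sub-OBDD rooted at $v$. Writing $\pi = x_1 \ldots x_n$, I would prove by reasoning along the variable order that the functions $f_v$ are exactly the \emph{essential cofactors} of $f$: the two constants $0,1$ (realized by the sinks), together with, for each $i$, every subfunction $f[a]$ obtained by fixing $x_1, \ldots, x_{i-1}$ according to some assignment $a$ that \emph{essentially depends} on $x_i$ (i.e.\ $f[a]|_{x_i=0} \neq f[a]|_{x_i=1}$). Condition (i) makes the map $v \mapsto f_v$ injective, condition (ii) guarantees that the function at a node labeled $x_i$ genuinely depends on $x_i$, and correctness of the OBDD together with reachability of each node from the source (which corresponds to a prefix assignment, unread variables being irrelevant to $f_v$) shows that every essential cofactor is realized by exactly one node. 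Since the node set, the variable labels, and --- via the Shannon decomposition $f_v = (\neg x_i \wedge f_v|_{x_i=0}) \vee (x_i \wedge f_v|_{x_i=1})$, so that the $b$-edge of $v$ points to the unique node computing $f_v|_{x_i=b}$ --- the edges are all forced by $f$ and $\pi$, any two reduced $\pi$-OBDDs for $f$ are isomorphic.

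Combining the two steps gives the statement: reducing an arbitrary $\pi$-OBDD for $f$ yields the unique reduced OBDD $R$ with $|R|$ no larger, so $R$ has minimal size, and any minimal OBDD, being reduced, coincides with $R$. For the algorithmic claim I would process the input OBDD from the sinks toward the source, respecting $\pi$, assigning each node a canonical identifier determined by its label and the identifiers of its two children and stored in a hash table; merging nodes with equal identifiers realizes (i) and collapsing a node whose children share an identifier realizes (ii), all in time polynomial in the input size. The complete case is identical except that rule (ii) is dropped --- deleting a redundant test would skip a variable and violate completeness --- so the canonical form is the \emph{quasi-reduced} OBDD in which one merges isomorphic nodes but keeps all variable tests; its nodes at level $i$ then correspond to \emph{all} distinct cofactors $f[a]$ over prefix assignments $a$ to $x_1,\ldots,x_{i-1}$, and the same injectivity-and-forcing argument yields uniqueness. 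The main obstacle is the bookkeeping in the cofactor characterization of the second step, in particular arguing cleanly that each node's function is a genuine cofactor of $f$ despite the fact that source-to-node paths need not read every preceding variable, and handling the boundary cases of constant subfunctions and sinks.
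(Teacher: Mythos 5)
Your proposal is correct and is exactly the classical argument (reduction to a reduced/quasi-reduced form plus the Sieling--Wegener-style characterization of nodes by subfunctions that essentially depend on the tested variable, with Bryant's bottom-up hashing for the algorithmic part); the paper itself gives no proof of this lemma, stating it as a known canonicity fact with references to Bryant and Wegener's textbook, where precisely this argument appears. The only step deserving a little more care is that injectivity of $v \mapsto f_v$ does not follow from rule (i) by itself but from the level-by-level induction along $\pi$ that you indicate, since rule (i) speaks of isomorphic sub-OBDDs rather than equal functions.
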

Throughout this paper, we always assume that OBDDs are minimized with the help of the algorithm of Lemma~\ref{lem:canonicity}.


\subsection{Combinatorial Rectangles}
Let $X$ be a set of variables and $\Pi=(X_1, X_2)$ a partition of~$X$. We call $\Pi$ \emph{balanced} if $\min(|X_1|, |X_2|) \ge \lfloor|X|/2\rfloor$. More generally, we say that $\Pi$ is $b$-balanced if $\min(|X_1|, |X_2|) \ge b |X|$. A combinatorial rectangle with partition $\Pi$ is a function $R(X) = R_1(X_1)\land R_2(X_2)$. A dual way of seeing $R$ is defining $A$ to be the models of $R_1$ and $B$ those of $R_2$. Then the models of $R$ are exactly $A\times B$ and in a slight abuse of notation we then also write $R= A\times B$. A function $R$ is called a balanced rectangle if and only if $R$ is a combinatorial rectangle with a partition $\Pi$ that is balanced.

Let $f$ be a Boolean function and let $R$ be a combinatorial rectangle. We say that $R$ is monochromatic with respect to $f$ if either all models of $R$ are models of $f$ or no model of $R$ is a model of $f$. When $f$ is clear from the context, we simply call $R$ a monochromatic rectangle without remarking $f$ explicitly. We also say that $f$ \emph{has} the monochromatic rectangle $R$.

We will use the following well-known connection between OBDD and rectangles~\cite{KushilevitzN97}.
\begin{theorem} \label{thm:OBDDrectangles}
  Let $g$ be a function in variables $X$ computed by a $\pi$-OBDD of width~$w$. Let $X_1$ be a prefix of the variable order $\pi$ and let $X_2:= X\setminus X_1$.
  Then $g(X) = \bigvee_{i=1}^w R_i(X)$, where every $R_i$ is rectangle with partition $(X_1, X_2)$.
\end{theorem}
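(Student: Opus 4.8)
The plan is to exploit that $X_1$ is a prefix of the order~$\pi$: a complete $\pi$-OBDD for $g$ reads every variable of $X_1$ before it reads any variable of $X_2$, so we may cut the diagram along the boundary between the two blocks. Each node on this cut will give rise to one rectangle, and the number of cut nodes is bounded by the width~$w$. This is the classical ``OBDD computes a disjunction of few rectangles'' argument, specialised to a prefix partition.

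Concretely, I would let $B$ be a complete $\pi$-OBDD of width~$w$ computing $g$, which is what the hypothesis provides since width is only defined for complete diagrams. Assume first that $X_2 \neq \emptyset$ and let $y$ be its first variable in the order~$\pi$. Since $B$ is complete and $X_1$ is a prefix of $\pi$, on every source--sink path all of $X_1$ is read first, in some order, and the node reached immediately afterwards is labeled~$y$; conversely every node labeled~$y$ is reached after reading exactly $X_1$. Let $u_1, \dots, u_k$ be the nodes of $B$ labeled~$y$, so that $k \le w$ by the definition of width. For each~$i$ I define $A_i \subseteq [X_1]$ to be the set of assignments $\tau$ of $X_1$ whose induced path from the source reaches $u_i$, and $B_i \subseteq [X_2]$ to be the set of assignments $\sigma$ of $X_2$ whose induced path from $u_i$ ends in the $1$-sink.

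The key step is then to verify, using determinism of the diagram, that these sets yield the claimed decomposition. Because every non-sink node has exactly one outgoing $0$-edge and one outgoing $1$-edge, each $\tau \in [X_1]$ determines a unique path from the source to a unique frontier node, so the sets $A_1, \dots, A_k$ partition $[X_1]$. For an arbitrary assignment $\tau \cup \sigma$ with $\tau \in [X_1]$ and $\sigma \in [X_2]$, the computation of $B$ first routes $\tau$ to the unique $u_i$ with $\tau \in A_i$ and then routes $\sigma$ from $u_i$, reaching the $1$-sink exactly when $\sigma \in B_i$. Hence $g(\tau \cup \sigma) = 1$ iff $\tau \in A_i$ and $\sigma \in B_i$ for this $i$, giving $g = \bigvee_{i=1}^{k} (A_i \times B_i)$. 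Writing $R_i = R_{1,i}(X_1) \land R_{2,i}(X_2)$ with $R_{1,i}$ and $R_{2,i}$ the indicator functions of $A_i$ and $B_i$ exhibits each $A_i \times B_i$ as a rectangle with partition $(X_1, X_2)$.

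Finally I would tidy up the count and the degenerate cases. If $k < w$, I pad the disjunction with $w-k$ copies of the empty rectangle (taking $R_{1} \equiv 0$), which is a rectangle with partition $(X_1,X_2)$ and leaves $g$ unchanged, so that exactly $w$ disjuncts appear; the cases $X_2 = \emptyset$ (so $X_1 = X$, with the frontier taken to be the two sinks) and $X_1 = \emptyset$ are handled symmetrically by letting the corresponding side of each rectangle be a constant function. The only genuine subtlety is the passage to the frontier nodes together with the bound $k \le w$, and this is precisely where both completeness of $B$ and the hypothesis that $X_1$ is a \emph{prefix} of $\pi$ (rather than an arbitrary subset) are essential: otherwise the variables of $X_1$ and $X_2$ could interleave along a path and no single cut would separate them.
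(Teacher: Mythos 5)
Your proof is correct. The paper itself gives no proof of this theorem, stating it as a well-known fact with a citation to Kushilevitz--Nisan; your argument --- cutting the complete OBDD at the frontier of nodes labeled with the first $X_2$-variable, assigning to each of the at most $w$ frontier nodes the rectangle of $X_1$-assignments reaching it and $X_2$-assignments leading from it to the $1$-sink, and padding with empty rectangles --- is precisely the classical argument behind that citation, including the correct observations that completeness and the prefix property of $X_1$ are what make the cut well-defined.
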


\section{Symbolic QBF Proof Systems}\label{sct:framework}
We consider line-based QBF proof systems where each line is an OBDD and is either the OBDD corresponding to a clause from the matrix, or else derived by a propositional proof system or universal reduction (cf. Frege systems with universal reduction~\cite{BeyersdorffBCP20}).
For simplicity, we will assume that each derivation begins with a sequence of OBDDs corresponding to the clauses in the matrix.

Formally, let~$\Phi = Q_1 x_1 \ldots Q_n x_n.C_1 \land \ldots \land C_m$ be a PCNF formula.
An \emph{OBDD derivation} of $L_k$ from $\Phi$ is a sequence $L_1, \dots, L_k$ of OBDDs, all with the same variable order $\pi$, such that each $L_i$ represents clause $C_i$ for $1 \leq i \leq m$, or is derived using one of the following rules:
\begin{enumerate}[itemsep=3pt,parsep=2pt]
\item \textbf{conjunction} $(\land)$: $L_i$ represents $L_j \land L_k$ for $j, k < i$.
\item \textbf{projection} $(\exists)$: $L_i$ represents $\exists x.L_j$ for some $x \in \var(L_j)$ and $j < i$.
\item \textbf{entailment} $(\models)$: $L_i$ is entailed by $L_{i_1}, \ldots, L_{i_k}$, for $i_1, \ldots i_k < i$.\footnote{Note that OBDD derivations using the entailment rule do not lead to proof systems in the sense of Cook and Reckhow~\cite{CookR79}, since checking entailment is $\mathsf{coNP}$-hard.}
\item \textbf{universal reduction} $(\forall)$: $L_i$ represents $L_j[u/c]$, where $j < i$, $u$ is a universally quantified variable that is rightmost among variables in $L_j$ and $c \in \{0, 1\}$.
\end{enumerate}
Here, $L_j[u/c]$ denotes the OBDD obtained from $L$ by removing each node labeled with variable $u$ and rerouting all incoming edges to its neighbor along the $c$-labeled edge (effectively substituting $c$ for $u$). The \emph{size} of an OBDD derivation is the sum of the sizes of the OBDDs in the derivation, and the \emph{width} of an OBDD derivation is the maximum width of any OBDD in the derivation.

It is not difficult to see that OBDD derivations are sound.
\begin{proposition}\label{prop:soundness}
  Let $L_1, \ldots, L_k$ be an OBDD derivation from $\Phi$. If $\Phi$ is true then $\prefix.L_1 \land \ldots \land L_k$ is true.
\end{proposition}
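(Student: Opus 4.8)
The plan is to exploit the fact that truth of a QBF is witnessed by a winning existential (Skolem) strategy, and to show that one such strategy for $\Phi$ simultaneously satisfies \emph{every} line of the derivation. Concretely, assume $\Phi$ is true and fix a winning strategy $\vec{g}$ for the existential player, i.e.\ an assignment of each existential variable to a function of the variables left of it such that, for every assignment $\alpha$ of the universal variables, the induced complete assignment $\beta(\alpha)$ satisfies the matrix $\varphi = C_1 \land \dots \land C_m$. The core claim, which I would prove by induction on $i$, is the invariant that $L_i(\beta(\alpha)) = 1$ for every line $L_i$ and every universal assignment $\alpha$. Once this is established, every $\beta(\alpha)$ satisfies $L_1 \land \dots \land L_k$, so $\vec{g}$ is also a winning strategy for $\prefix.L_1 \land \dots \land L_k$, which is therefore true.

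For the inductive step the first three rules are immediate, because in each case $L_i$ is a \emph{pointwise} logical consequence of earlier lines. If $L_i$ is a matrix clause $C_i$, then $\beta(\alpha)$ satisfies it because it satisfies $\varphi$. If $L_i$ represents $L_j \land L_\ell$ with $j,\ell < i$, the induction hypothesis gives $L_j(\beta(\alpha)) = L_\ell(\beta(\alpha)) = 1$. If $L_i$ represents $\exists x.L_j$, then $L_j \models \exists x.L_j$ as Boolean functions, so $L_j(\beta(\alpha)) = 1$ forces $L_i(\beta(\alpha)) = 1$. Finally, if $L_i$ is entailed by $L_{i_1}, \dots, L_{i_\ell}$, then these are satisfied by $\beta(\alpha)$ by the induction hypothesis and, since $L_{i_1} \land \dots \land L_{i_\ell} \models L_i$, so is $L_i$.

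The only genuinely nontrivial case, and the step I expect to be the main obstacle, is universal reduction, since here $L_i = L_j[u/c]$ is in general \emph{not} implied by $L_j$. The argument uses crucially that $u$ is rightmost among the variables of $L_j$, so that $\var(L_j) \subseteq D_{\Phi}(u) \cup \{u\}$ and, in particular, $L_i = L_j[u/c]$ depends only on variables in $D_{\Phi}(u)$. Given $\alpha$, I would form the modified universal assignment $\alpha'$ that agrees with $\alpha$ everywhere except that $\alpha'(u) = c$. Because the strategy functions for the existential variables in $D_{\Phi}(u)$ read only variables to their left, and $\alpha, \alpha'$ differ only at $u \notin D_{\Phi}(u)$, the induced assignments agree on $D_{\Phi}(u)$, that is $\beta(\alpha')|_{D_{\Phi}(u)} = \beta(\alpha)|_{D_{\Phi}(u)}$, while $\beta(\alpha')(u) = c$. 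The induction hypothesis applied to $\alpha'$ yields $L_j(\beta(\alpha')) = 1$, and since $L_j$ depends only on $D_{\Phi}(u) \cup \{u\}$ this value equals $L_j$ evaluated at $\beta(\alpha)|_{D_{\Phi}(u)}$ with $u$ fixed to $c$, which is exactly $L_i(\beta(\alpha))$. Hence $L_i(\beta(\alpha)) = 1$, completing the induction.

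One point to handle carefully is the interface with the paper's definitions, which phrase falsity via universal winning strategies: I would either invoke the standard duality that a true QBF admits a winning existential (Skolem) strategy, or, equivalently, run the entire argument contrapositively. The reset step for $\forall$-reduction is exactly where the rightmost-variable side condition is indispensable, since without it changing $u$ to $c$ could alter existential variables that $L_j$ actually reads, and the substituted OBDD would no longer be forced to $1$ along plays of $\vec{g}$.
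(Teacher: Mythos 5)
Your proof is correct and takes essentially the same approach as the paper's: both reduce soundness to the existence of an existential (Skolem) winning strategy, dispose of conjunction, projection, and entailment by pointwise logical consequence, and settle the crucial $\forall$-reduction case by resetting $u$ to $c$ and observing that the strategy's responses on variables left of $u$ --- which, by the rightmost-variable side condition, include all variables of $L_j[u/c]$ --- are unchanged. The only difference is presentational: you fix a single Skolem strategy for $\Phi$ and prove a pointwise invariant by direct induction on the lines, whereas the paper inducts on the truth of the prefixed partial conjunctions $\prefix.\varphi_i$ and argues the reduction case by contradiction; both rely on the same standard (and in the paper likewise unproved) duality that a true QBF admits an existential winning strategy.
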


An \emph{OBDD-refutation} of $\Phi$ is an OBDD derivation of an OBDD representing~$0$.
A \emph{$\pi$-OBDD derivation} is an OBDD derivation where all OBDDs use variable order $\pi$.
We sometimes explicitly mention the derivation rules used in a proof. For instance, an $\ocef$ derivation is OBDD derivation using only conjunction, projection, and universal reduction.

\subsection*{A Proof System for Symbolic Quantifier Elimination}
We can use symbolic QBF proof systems to study the QBF solver {\sc QBDD} proposed by Pan and Vardi~\cite{PanV04}.
Given a PCNF formula~$\Phi = Q_1 x_1 \ldots Q_n x_n.\varphi$, {\sc QBDD} maintains \emph{buckets} $S_1, \dots, S_n$ of OBDDs such that $x_i$ is the rightmost variable (with respect to the quantifier prefix) occurring in the OBDDs of $S_i$.
Initially, the $S_i$ are the sets of clauses in $\varphi$ that have $x_i$ as their rightmost variable, represented as OBDDs.
{\sc QBDD} proceeds by eliminating variables from the inside out, starting with the variable~$x_n$.
To eliminate the variable~$x_i$, it computes the conjunction of OBDDs in bucket $S_i$, then removes $x_i$ from the result by quantifying either existentially or universally, depending on the quantifier $Q_i$.
The resulting OBDD is then added to the correct bucket.
The procedure terminates with a constant $1$ or constant $0$ OBDD, depending on whether the QBF~$\Phi$ is true or false.
Since any universal variable is innermost upon elimination, a run of {\sc QBDD} corresponds to an $\ocef$-derivation.

The aim of this section will be showing the following result:
\begin{proposition}\label{prop:twpanvardi}
 {\sc QBDD} solves PCNF formulas $\Phi$ with $q$ quantifier blocks and pathwidth $k$ in time $\mathsf{tower}(k, q+1)\;\poly(|\Phi|)$. 
\end{proposition}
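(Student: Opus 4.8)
The plan is to reduce the running-time bound to a bound on the \emph{width} of the OBDDs manipulated by {\sc QBDD}, and then to control that width through a combination of the pathwidth of the matrix and the number of quantifier blocks. Since a run of {\sc QBDD} is an $\ocef$-derivation, its cost is dominated by the sizes of the intermediate OBDDs: every operation it performs is a conjunction or a single-variable projection, which by Lemma~\ref{lem:OBDDproperties} runs in time polynomial in the sizes of its operands, followed by minimization, which by Lemma~\ref{lem:canonicity} is again polynomial. As there are only $\poly(|\Phi|)$ many such operations, and as a minimized complete OBDD of width $w$ on the $n = |\var(\Phi)|$ variables has size at most $w\,(n+1)$ (recall the observation that completion costs a factor of at most $|X|+1$), it suffices to show that every OBDD occurring in the run has width bounded by a tower of exponentials of height $q+1$ in $k$.

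First I would fix a single variable order $\pi$ for all OBDDs in the run. This order has to serve two purposes simultaneously: it must be \emph{compatible with a width-$k$ path decomposition} of the primal graph of the matrix, so that the conjunction of the clauses has small width, and it must be \emph{compatible with the quantifier prefix}, so that {\sc QBDD}'s inside-out elimination can be realised by the $\ocef$ rules --- in particular so that the universal variable currently being eliminated is $\pi$-rightmost among the variables of its bucket, as required by the $(\forall)$-rule (universal elimination of $u$ being obtained by conjoining the two reducts $L_j[u/0]$ and $L_j[u/1]$). Choosing $\pi$ so as to meet both demands, and arguing that the relevant path decomposition can be taken to respect the block structure of the prefix, is the step I expect to be the main obstacle, and it is precisely the point at which I would invoke the variable-elimination machinery of~\cite{CapelliM19}.

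The core of the width analysis is a single structural lemma: if $f$ is computed by a complete $\pi$-OBDD of width $w$ and $Y$ is any set of variables, then both $\exists Y.\,f$ and $\forall Y.\,f$ are computed by $\pi$-OBDDs of width at most $2^{w}$. I would prove this by counting subfunctions at a cut of $\pi$: splitting $Y$ into its parts $Y_L$ and $Y_R$ left and right of the cut, every subfunction of $\exists Y.\,f$ is a disjunction of a subset of the at most $w$ functions obtained by projecting the at most $w$ subfunctions of $f$ through $Y_R$, so there are at most $2^{w}$ of them; the universal case is identical with conjunction in place of disjunction. Applying this with $Y$ the innermost quantifier block and iterating outward shows that after eliminating $j$ blocks the resulting function has $\pi$-OBDD width bounded by a tower of exponentials of height $j+1$ whose top entry is the width $2^{O(k)}$ of the initial matrix OBDD; this last bound in turn follows from the fact that along a $\pi$ compatible with a width-$k$ path decomposition at most $k+1$ variables are ``active'' at any cut, so that conjunctions of clauses have width at most $2^{k+1}$.

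Finally I would assemble these pieces into a bound for \emph{every} OBDD in the run, not merely those at block boundaries. The key observation is that, because we minimize after each operation (Lemma~\ref{lem:canonicity}), the width of each OBDD equals the structural $\pi$-width of the function it represents; every such function is a projection, over some inner blocks, of a conjunction of clauses and previously computed OBDDs, and therefore falls under the iterated bound above. Single-variable projections carried out within a block, and the pairwise conjunctions used to merge a bucket, at most square the width before minimization, which does not increase the height of the tower and is hence absorbed. Combining the per-operation polynomial running time with the resulting width bound of a tower of height $q+1$ in $k$ yields the claimed overall running time $\mathsf{tower}(k, q+1)\,\poly(|\Phi|)$.
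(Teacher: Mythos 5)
Your skeleton is the same as the paper's: reduce the time bound to a width bound on all intermediate OBDDs, take the variable order $\pi$ from a width-$k$ path decomposition so that every conjunction of a subset of the clauses has complete $\pi$-OBDD width $2^{O(k)}$ (the paper gets this from~\cite{FerraraPV05}), and handle quantifier elimination with the lemma that $\exists Y.f$ and $\forall Y.f$ have $\pi$-OBDD width at most $2^w$ for an \emph{arbitrary} set $Y$ of variables. That ``core structural lemma'' of yours is exactly the variable-elimination result of~\cite{CapelliM19} which the paper cites and iterates as Corollary~\ref{cor:tower}; your subfunction-counting proof of it is correct, so on this point you re-prove rather than cite. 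The bookkeeping (minimization after every step via Lemma~\ref{lem:canonicity}, polynomial apply steps via Lemma~\ref{lem:OBDDproperties}, size $\leq w(n+1)$ for complete width-$w$ OBDDs) also matches the paper.

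Two points need repair. First, the step you single out as ``the main obstacle'' is a phantom: $\pi$ need not be compatible with the quantifier prefix at all. ``Rightmost'' in the universal-reduction rule, and in \textsc{QBDD}'s bucket discipline, refers to the \emph{prefix} order, which is independent of the OBDD order $\pi$; and your own structural lemma eliminates any set $Y$ regardless of where $Y$ sits in $\pi$ --- that independence is precisely what the result of~\cite{CapelliM19} buys and why it is invoked. Worse, insisting on a block-respecting order would sink the proof: for $\textsc{EQ}'_n$ (pathwidth $4$) any order placing all the $x_i$ before all the $u_i$ forces the matrix to have width $2^{\Omega(n)}$, so your hoped-for claim that ``the relevant path decomposition can be taken to respect the block structure of the prefix'' is false in general. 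The order must come from the path decomposition alone, as in the paper. Second, your width control for the work inside a bucket (conjunctions ``at most square the width before minimization, which \ldots is hence absorbed'') does not suffice as stated: conjoining the $m$ OBDDs of a bucket pairwise only gives a bound of $w^m$, which is exponential in $|\Phi|$, not absorbed. The argument that actually works, and that the paper states tersely, is semantic: after minimization the width of each line equals the canonical width of the function it represents, and by \textsc{QBDD}'s bucket invariant every such function (including every partial bucket conjunction) equals $Q_ix_i\ldots Q_nx_n.\bigwedge T$ for some subset $T$ of the input clauses --- the nested projections flatten into a single inside-out projection of a conjunction of clauses because each eliminated variable occurs only in the OBDDs of its own bucket. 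Since $\bigwedge T$ inherits the width-$k$ path decomposition, Corollary~\ref{cor:tower} bounds every line by $\mathrm{tower}(k,q+1)$. You gesture at this flattening but never make it, and it is the step that carries the proof.
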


Since, as stated above, the runs of {\sc QBDD} are proofs in $\ocef$, we directly get the following result on the strength of $\ocef$.

\begin{corollary}
 Every false PCNF $\Phi$ with $q$ quantifier width and pathwidth $k$ has an $\ocef$-refutation of size $\mathsf{tower}(k, q+1)\poly(|\Phi|)$.
\end{corollary}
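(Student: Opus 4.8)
The plan is to fix the OBDD variable order $\pi$ according to a path decomposition of the primal graph of the matrix of width $k$, and then to bound separately (i) the number of operations {\sc QBDD} performs and (ii) the size of the largest OBDD it ever produces. Since by Lemma~\ref{lem:OBDDproperties} and Lemma~\ref{lem:canonicity} a single conjunction, projection, or minimization costs time polynomial in the sizes of the OBDDs involved, the overall running time is $\poly(S)\cdot\poly(|\Phi|)$, where $S$ is the maximum OBDD size in the run. Because a minimized complete OBDD of width $w$ over $n$ variables has size $O(wn)$, it suffices to bound (i) by $\poly(|\Phi|)$ and the maximum \emph{width} $w$ in (ii).

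First I would bound the number of operations. Each of the $n$ variables is eliminated exactly once; eliminating $x_i$ conjoins the OBDDs currently in bucket $S_i$ and then performs a single projection (an existential projection, or, since a universal variable is innermost when it is eliminated, a universal reduction). As every OBDD is placed in exactly one bucket and is conjoined only when that bucket is processed, the total number of conjunction and projection steps is $O(n+m) = \poly(|\Phi|)$, as required for (i).

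The heart of the argument is the width bound (ii), and here I would invoke the variable-elimination result of~\cite{CapelliM19}. The key estimate, which follows from the rectangle characterization of Theorem~\ref{thm:OBDDrectangles}, is that projecting an \emph{arbitrary} set of variables out of a width-$w$ $\pi$-OBDD yields a $\pi$-OBDD of width at most $2^{O(w)}$: at each prefix cut of $\pi$ the projected function is a disjunction over the at most $w$ subfunctions of the original and so takes at most $2^{w}$ values. Crucially this single exponential is independent of the \emph{number} of variables projected, so eliminating an entire quantifier block costs only one exponentiation rather than one per variable. On the other hand, because $\pi$ follows a path decomposition of width $k$, at every prefix cut the conjunction of clause-OBDDs depends only on the at most $k+1$ frontier variables, so all OBDDs arising while the innermost block is eliminated have width $2^{O(k)}$. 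An induction over the $q$ blocks, processed from the inside out, combining these two facts then gives that after eliminating $j$ blocks every OBDD has width at most $\mathsf{tower}(O(k), j)$, hence at most $\mathsf{tower}(O(k),q)$ in total. Substituting $S = \mathsf{tower}(O(k),q)\cdot n$ into the running-time bound and absorbing the polynomial blow-ups of the apply algorithm into one extra level of the tower yields the claimed $\mathsf{tower}(k,q+1)\poly(|\Phi|)$.

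I expect the main obstacle to be making this inductive width bound go through for {\sc QBDD}'s actual bucket-elimination schedule rather than for an idealized ``conjoin everything, then project one block'' procedure. Because {\sc QBDD} interleaves single-variable projections with conjunctions against both clause-OBDDs and previously projected OBDDs, one must verify that these intermediate OBDDs never exceed the width predicted for a clean block-by-block elimination --- in particular that the conjunctions inside a bucket do not multiply widths uncontrollably, and that the fixed order $\pi$ remains an admissible elimination order throughout. This bookkeeping is exactly what the machinery of~\cite{CapelliM19} is designed to handle, so the remaining work is to check that {\sc QBDD}'s schedule, under the above choice of $\pi$, meets its hypotheses.
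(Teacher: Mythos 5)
Your overall route---fix $\pi$ from a width-$k$ path decomposition, bound the operation count, and bound OBDD widths by combining the pathwidth bound with the single-exponential block elimination of~\cite{CapelliM19}---is the same as the paper's (the corollary is just Proposition~\ref{prop:twpanvardi} read off as a derivation, since a run of {\sc QBDD} \emph{is} an $\ocef$-derivation). But the step you flag as the ``main obstacle'' is a genuine gap, and your proposed way out does not close it. The lemma from~\cite{CapelliM19}, and its iteration in Corollary~\ref{cor:tower}, bounds the width of eliminating variables from a \emph{single given} OBDD; it has no hypotheses about elimination schedules that {\sc QBDD} could be checked to ``meet'', and a direct width-tracking induction over the actual run really does break down: conjoining $t$ OBDDs of width $w$ inside one bucket only gives width $w^t$ a priori, which is not absorbed into a tower of height $q+1$.

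The missing idea is canonicity. The paper assumes throughout (Lemma~\ref{lem:canonicity}) that every OBDD in the run is minimized, and for complete OBDDs the minimal OBDD also has minimal width level by level; hence the width of each line is bounded by the width of \emph{any} OBDD computing the function that line represents, and the operational schedule becomes irrelevant. One then characterizes these functions semantically: because bucket elimination consumes, when processing $x_i$, all remaining OBDDs that contain $x_i$, the variables already quantified away inside distinct bucket entries are disjoint and occur nowhere else, so the quantifiers can be pulled out of conjunctions. Consequently every line of the run (including the partial conjunctions formed inside a bucket) represents a function of the form $\mathcal{Q}'.P$, where $P$ is a conjunction of a subset of the original clauses and $\mathcal{Q}'$ quantifies inner variables, inside-out, in at most $q$ blocks. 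Since $P$ inherits the width-$k$ path decomposition, it has a complete $\pi$-OBDD of width $2^k$ by the result of~\cite{FerraraPV05}, and Corollary~\ref{cor:tower} then bounds the canonical width of $\mathcal{Q}'.P$ by $\mathrm{tower}(2^k,q)=\mathrm{tower}(k,q+1)$. With that bound in place, your operation count and the size bound $wn$ for complete width-$w$ OBDDs finish the argument exactly as you outline.
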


As the basic tool, we use the following variable elimination result for OBDDs.

\begin{lemma}[\cite{CapelliM19}]
 Let $B$ be an OBDD of width $w$ and let $X$ be a subset of the variables in $B$. Then there is an OBDD~$B'$ of width $2^w$ that encodes $\exists X.B$ with the same variable order as~$B$. Moreover $B'$ can be computed in time $2^w \poly(|B|)$.
\end{lemma}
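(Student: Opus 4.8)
The plan is to treat existential projection as nondeterminism and to determinize by a layered powerset (subset) construction, exactly as one determinizes an NFA, reading the width bound off the fact that each layer of $B$ has at most $w$ nodes. First I would assume $B$ is complete (the paper's notion of width is only defined for complete OBDDs; otherwise apply the Observation). Then, with respect to the order $\pi = y_1, \dots, y_n$, the nodes of $B$ split into clean layers $V_1, \dots, V_n$ with $|V_i| \le w$, where $V_i$ collects the nodes labeled $y_i$, and every edge out of a node in $V_i$ lands in $V_{i+1}$. Write $Y = \var(B) \setminus X$ for the variables to be kept, and denote by $\sigma_c(v)$ the node reached from $v$ along its $c$-edge.

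The construction of $B'$ maintains a set $S$ of nodes of $B$ all lying in a single layer, starting from $S = \{s\}$ at the source layer, and processes the layers in the order $\pi$. Before branching I advance past projected variables: while the current variable lies in $X$, I replace $S$ by $\{\sigma_0(v), \sigma_1(v) : v \in S\}$ and move to the next layer, so that no node of $B'$ is created for an $X$-variable (this union is the $\epsilon$-closure that eliminates the variable existentially). Once the current variable $y_i$ lies in $Y$, I create (or look up) a node of $B'$ labeled $y_i$ associated with the pair $(i, S)$ and set its $c$-edge to the node obtained by recursing on $\{\sigma_c(v) : v \in S\}$ in layer $i{+}1$. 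When $S$ reaches the sink layer, I return the $1$-sink if $S$ contains the $1$-sink of $B$ and the $0$-sink otherwise, identifying $S = \emptyset$ with the $0$-sink throughout.

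Correctness follows from the \emph{invariant} that, after reading a partial $Y$-assignment $\alpha$ along a path of $B'$, the associated set $S$ is exactly the set of $B$-nodes reachable from $s$ by reading $\alpha$ on the $Y$-variables and some assignment on the $X$-variables seen so far; the closure step is precisely what quantifies the $X$-variables existentially. Hence a full $Y$-assignment $\alpha$ reaches the $1$-sink of $B'$ iff some completion by an $X$-assignment reaches the $1$-sink of $B$, i.e. iff $(\exists X.B)(\alpha) = 1$. For the width, every node of $B'$ in a kept layer $y_i$ is named by a subset $S \subseteq V_i$, and since $|V_i| \le w$ there are at most $2^w$ such subsets, giving width at most $2^w$. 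For the running time, $B'$ has at most $n \cdot 2^w$ nodes, and computing the two successor sets of a node costs $O(w)$ edge look-ups; with memoization on the pair $(i,S)$ (e.g.\ a hash table keyed by the sorted node set) the whole construction runs in time $2^w\,\poly(|B|)$, after which I may minimize with Lemma~\ref{lem:canonicity} without increasing the width.

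The main obstacle I anticipate is bookkeeping rather than conceptual: getting the interleaving of $X$- and $Y$-layers right so that $B'$ is a syntactically valid $\pi$-OBDD (each source-to-sink path visits the kept variables in the order $\pi$, each internal node has exactly a $0$- and a $1$-edge, and projected variables are genuinely absent), and implementing the memoization so that two layer-$i$ sets are recognized as equal and merged — this identification is exactly what caps the width at $2^w$. A secondary point to check is that assuming completeness is free: the Observation lets me complete $B$ first, and completion does not increase the number of distinct layer-$i$ subfunctions, so the width stays $w$. Alternatively, one can argue the width bound directly by cofactor counting, noting that each subfunction of $\exists X.B$ obtained by fixing the kept variables before $y_i$ is determined by the reachable subset $S \subseteq V_i$, of which there are at most $2^w$.
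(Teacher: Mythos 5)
Your proposal is correct, and it is essentially the expected argument: the paper itself gives no proof of this lemma (it is imported from~\cite{CapelliM19}, where it is established for the more general setting of structured DNNF), and the proof there is exactly your layered powerset determinization, with the closure step handling quantified layers, the width bound $2^w$ read off from the subsets of a layer of at most $w$ nodes, and the reduction to complete OBDDs handled as you do via the Observation. Your correctness invariant and the time analysis (memoization on pairs $(i,S)$, giving at most $n \cdot 2^w$ states with $O(w)$ work per edge) are sound as written.
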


Note that since OBDD can be negated without increase of the representation size, we get that the same result is true for $\forall$-elimination. Iterating this result directly yields the following corollary.

\begin{corollary}\label{cor:tower}
 Let $B$ be an OBDD of width $w$ and let $Q_1 X_1 \ldots Q_q X_q$ a variable prefix with $q$ blocks. Then $Q_1 X_1 \ldots Q_q X_q\ F$ has an OBDD representation $B'$ of width $\mathrm{tower}(w, q)$. Moreover, $B'$ can be computed in time $\mathrm{tower}(w, q)\poly(|B|)$.
\end{corollary}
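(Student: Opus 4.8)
The plan is to obtain $B'$ by applying the variable-elimination lemma once per quantifier block, processing the prefix from the innermost block $X_q$ outward. The observation that makes the resulting tower have height $q$ rather than height $n=|\var(B)|$ is that the lemma removes an \emph{entire} variable set $X$ at the cost of a single exponential increase of the width; since all variables inside one block share the same quantifier, I can eliminate a whole block in one invocation. Handling universal blocks costs no more than existential ones: using $\forall X.\,g \equiv \neg\,\exists X.\,\neg g$ together with the fact that negating an OBDD (swapping its two sinks) changes neither its size nor its width, a universal block is eliminated by one negation, one application of the lemma, and a second negation, again a single exponential blow-up.

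Concretely, writing $\mathrm{tower}(w,0)=w$ and $\mathrm{tower}(w,i+1)=2^{\mathrm{tower}(w,i)}$, I set $B_0 := B$ and for $i=1,\dots,q$ let $B_i$ be the OBDD eliminating block $X_{q-i+1}$ from $B_{i-1}$. I prove by induction on $i$ the invariant that $B_i$ represents $Q_{q-i+1}X_{q-i+1}\ldots Q_q X_q.\,B$ and has width at most $\mathrm{tower}(w,i)$. The base case $i=0$ is immediate, and the inductive step is a direct application of the lemma: feeding it an OBDD of width $\mathrm{tower}(w,i-1)$ yields one of width $2^{\mathrm{tower}(w,i-1)}=\mathrm{tower}(w,i)$. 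Setting $B':=B_q$ then gives an OBDD for $Q_1X_1\ldots Q_q X_q.\,B$ of width $\mathrm{tower}(w,q)$, as required. Throughout I keep the diagrams complete using the earlier Observation, which does not increase the width.

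For the time bound I would first note that a complete OBDD of width $W$ over $n$ variables has at most $(n+1)W$ nodes, so $|B_{i-1}| \le (n+1)\,\mathrm{tower}(w,i-1)$. The lemma therefore computes $B_i$ from $B_{i-1}$ in time $2^{\mathrm{tower}(w,i-1)}\,\poly(|B_{i-1}|) = \mathrm{tower}(w,i)\,\poly(n,\mathrm{tower}(w,i-1))$, and the total running time is the sum of $q$ such terms, dominated by the last. The one point that genuinely needs care---and the only real obstacle in an otherwise routine iteration---is to verify that the polynomial overhead does not raise the height of the tower: since $\mathrm{tower}(w,i-1)^{O(1)} \le 2^{\mathrm{tower}(w,i-1)} = \mathrm{tower}(w,i)$, the polynomial factor incurred at level $i-1$ is swallowed by the single exponential of level $i$, so the whole sum collapses to $\mathrm{tower}(w,q)\,\poly(|B|)$ (a constant-power blow-up of the final tower, if one tracks it precisely, is absorbed by an additive constant in the base $w$).
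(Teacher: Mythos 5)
Your proof is correct and follows exactly the route the paper intends: the paper derives this corollary by noting that negating an OBDD (swapping sinks) costs nothing, so universal blocks reduce to existential ones, and then ``iterating'' the variable-elimination lemma once per block from the inside out. Your write-up simply makes explicit the induction on blocks and the check that the polynomial overhead per step is absorbed by the next exponential, which the paper leaves implicit.
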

An analogous construction for the more general representation of structured DNNF~\cite{PipatsrisawatD08} is at the heart of the treewidth based QBF-algorithm in~\cite{CapelliM19}.

We can now proceed with the proof of Proposition~\ref{prop:twpanvardi}.
\begin{proof}[of Proposition~\ref{prop:twpanvardi}]
 Let $(P, \lambda)$ be a path decomposition of width $k$ of the primal graph of $F$. In~\cite{FerraraPV05} it is shown that there is an variable order $\pi$ depending only on $(P, \lambda)$ such that there is a complete OBDD of width $2^k$ computing $F$. Let $P_i := \bigwedge_{j\in [i]} S_i$. Then $P_i$ is the conjunction of some clauses of~$F$, so $(P,\lambda)$ is a path decomposition of $P_i$ of width at most $k$. It follows that for every $i\in [q]$ there is a complete OBDD representation of $P_i$ with order $\pi$ and width at most~$2^k$.
 
 We claim that all OBDD that are computed by {\sc QBDD} have width at most $\mathrm{tower}(k, q+1)$. Note first that all $S_i$ have pathwidth at most $k$ as above, so we can compute all of them by only conjoining OBDDs with order $\pi$ and of width at most $2^k$. Now whenever we eliminate a variable, the result is a a function that we get from $P_i$ by eliminating some variables. But since these variables are only in~$q$ quantifier blocks and we eliminate from the inside out, we have by Corollary~\ref{cor:tower} that the width of the result is at most $\mathrm{tower}(2^k,q) = \mathrm{tower}(k,q+1)$ which completes the proof. Noting that a complete OBDD of width $w$ in $n$ variables has size at most $wn$ and using canonicity and Lemma~\ref{lem:OBDDproperties} in all steps completes the proof.
\end{proof}

\section{Relation to Other Proof Systems}
In this section, we show that $\ocef$ is separated from several clausal QBF proof systems.
These results are obtained by identifying classes of QBFs that are hard for these proof systems but having bounded pathwidth and a fixed number of quantifier blocks.

We first consider \emph{Q-Resolution}~\cite{BuningKF95}, \emph{QU-Resolution}~\cite{Gelder12}, and \emph{Long-Distance Q-Resolution}~\cite{BalabanovJ12,EglyLW13}, which can be further generalized and combined into \emph{Long-Distance QU-Resolution}~\cite{BalabanovWJ14}.\footnote{This system is typically referred to as \emph{$LQU^+$-Resolution}.}
QU-Resolution allows resolution on universal pivots, Long-Distance Q-Resolution can derive tautological clauses in certain cases, and Long-Distance QU-Resolution additionally permits the derivation of tautological clauses by resolution on universal pivots (the proof rules of this system are shown in Figure~\ref{fig:lquplus} of Appendix~\ref{app:proofsystems}).

For all the proof systems above, we define the size of a refutation to be the number of clauses in it.
As usual, we say a proof system $\mathsf{P}$ $p$-simulates another proof system $\mathsf{P'}$ if for every proof $\Pi'$ in $\mathsf{P'}$ there is a proof $\Pi$ in $\mathsf{P}$ such that the length of $\Pi$ is polynomial in that of $\Pi'$.

\begin{proposition}
  $\ocef$ $p$-simulates QU-Resolution.
\end{proposition}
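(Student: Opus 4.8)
The plan is to simulate a QU-Resolution refutation clause by clause, representing each clause by an OBDD in the variable order $\pi$ given by the quantifier prefix $x_1,\dots,x_n$. In this order every clause has an OBDD of width two and size $O(n)$, so all lines produced will be small. Rather than insist on an exact representation, I would maintain the looser invariant that for every clause $C$ derived in the QU-Resolution proof the $\ocef$-derivation contains an OBDD for some clause $\tilde C \subseteq C$, i.e.\ one that is at least as strong as $C$. Since a refutation ends in the empty clause and $\tilde C \subseteq \emptyset$ forces $\tilde C = \emptyset$, the final line will then be the OBDD for $0$, as required. Input clauses are handled by taking $\tilde C = C$, which is exactly the initial segment of lines prescribed by the definition of an OBDD derivation.

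For a resolution step with pivot $x$, parents $C_1 = A \lor \ell$ and $C_2 = B \lor \bar\ell$, and resolvent $R = A \lor B$, writing $\tilde A = \tilde C_1 \setminus \{\ell\}$ and $\tilde B = \tilde C_2 \setminus \{\bar\ell\}$, I would use the identity
\[
  \exists x.\bigl((\tilde A \lor \ell) \land (\tilde B \lor \bar\ell)\bigr) = \tilde A \lor \tilde B .
\]
Thus conjoining the already-derived OBDDs for $\tilde C_1$ and $\tilde C_2$ and projecting out $x$ yields an OBDD for $\tilde R := \tilde A \lor \tilde B \subseteq R$ in two $\ocef$-steps. Crucially, the projection rule is oblivious to whether $x$ is existential or universal, so this simulates resolution on universal pivots as well. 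If the pivot literal is missing from $\tilde C_1$ because of an earlier strengthening, then already $\tilde C_1 \subseteq R$ and I simply reuse the line for $\tilde C_1$; symmetrically for $\tilde C_2$.

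The subtle case is $\forall$-reduction, which I expect to be the main obstacle: QU-Resolution removes a universal literal $\ell_u$ from $C$ provided no existential variable lies to its right, whereas the $\ocef$ $\forall$-rule can only eliminate a universal variable that is rightmost among \emph{all} variables of the line. I would resolve this using the strengthening invariant. If $\ell_u \notin \tilde C$ there is nothing to do and I reuse $\tilde C \subseteq C \setminus \{\ell_u\}$. Otherwise $u$ is trailing in $\tilde C$ as well, so every variable of $\tilde C$ to the right of $u$ is universal and trailing; I eliminate this entire block from rightmost down to $u$ by repeated application of the $\ocef$ $\forall$-rule, each time to the current rightmost variable and with the constant chosen to falsify its literal. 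The result represents $\tilde C$ with all universal literals of variable $\ge u$ deleted, which is a subclause of $C \setminus \{\ell_u\}$. Over-deleting the trailing universals only strengthens the clause, so it is consistent with the invariant; this is exactly the step that breaks an exact clause-by-clause simulation, and the point is that carrying a strengthened clause makes the over-reduction harmless, since any later resolution on a deleted universal falls into the reuse case above.

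Finally I would verify sizes. Each $\tilde C$ is a clause, so its OBDD has size $O(n)$; a single resolution step conjoins two such OBDDs, which is of size $\poly(n)$ by the apply algorithm of Lemma~\ref{lem:OBDDproperties}, and then projects, after which canonicity (Lemma~\ref{lem:canonicity}) returns a clause-OBDD of size $O(n)$; a $\forall$-reduction step costs at most $n$ eliminations. Hence each QU-Resolution inference is simulated by $\poly(n)$ $\ocef$-lines, and the total length of the resulting $\ocef$-refutation is polynomial in the length of the QU-Resolution refutation and in $|\Phi|$, which establishes the $p$-simulation.
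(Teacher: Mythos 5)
Your proposal is correct, and its skeleton is the same as the paper's: represent clauses as OBDDs under a fixed order, simulate a resolution step by conjoining the two premise OBDDs and projecting out the pivot with $\exists$ (which, as you note, works for universal pivots too, since the projection rule does not care about the quantifier type), and simulate QU-Resolution's reduction rule by the $\ocef$ $\forall$-rule. Where you genuinely depart from the paper is the treatment of universal reduction. The paper's proof simply asserts that removing a universal literal $l$ ``corresponds to'' one OBDD universal reduction substituting the falsifying constant, which glosses over exactly the mismatch you isolate: the $\ocef$ rule only eliminates the variable that is rightmost among \emph{all} variables of the line, while QU-Resolution may reduce a universal literal even when other universal literals lie to its right (e.g.\ deriving $x_1 \lor u_3$ from $x_1 \lor u_2 \lor u_3$ under the prefix $\exists x_1 \forall u_2 \forall u_3$), so an exact line-by-line simulation of that single step is impossible. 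Your subclause invariant $\tilde C \subseteq C$, together with over-reduction of the entire trailing universal block (legal because that block contains no existential variables, so each successive rightmost variable is universal) and the reuse case when a pivot or reduced literal has already been deleted, closes this gap cleanly; it is the natural repair, and your version of the argument is in this respect more rigorous than the one printed in the paper, at the cost of carrying the invariant through every case. The size accounting in both arguments is the same and unproblematic.
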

\begin{proof}
  We simulate QU-resolution line by line, using the fact that all clauses have small OBDD representations.
  An application of universal reduction in QU-resolution that removes literal~$l$ corresponds to an application of universal reduction in an OBDD derivation that replaces~$l$ by~$0$.
  Resolution of clauses $C_1 \lor x$ and $\neg x \lor C_2$ can be simulated by first computing an OBDD $L'$ representing $(C_1 \lor x) \land (\neg x \lor C_2)$. Each clause~$C$ can be represented by an OBDD of size $O(|C|)$, for any variable ordering, so by Lemma~\ref{lem:OBDDproperties}, the OBDD $L'$ can be computed in time polynomial in the size of the premises.
  To obtain an OBDD~$L$ representing the resolvent $C_1 \lor C_2$, we simply project out the pivot~$x$, that is, $L = \exists x.L'$.
\end{proof}

Lower bounds against QU-Resolution can be obtained by lifting lower bounds against bounded-depth circuits and decision lists~\cite{BeyersdorffCJ19,BeyersdorffBM20}.
This is because a decision list~\cite{Rivest87} encoding a universal winning strategy can be efficiently extracted from QU-Resolution refutations~\cite{BalabanovJ12}, and decision lists can be succinctly represented by bounded-depth circuits.
For instance, the class $\textsc{QParity}$ of formulas with the parity function as a unique universal winning strategy is hard for QU-Resolution~\cite{BeyersdorffCJ19}.
This class was modified so as to also demonstrate hardness for Long-Distance QU-Resolution, resulting in the class of formulas defined below.
\begin{align*}
  \textsc{QUParity}_n :=\: &\exists x_1 \ldots \exists x_n \forall z_1 \forall z_2 \exists t_2 \ldots \exists t_n. \\ &\textnormal{xor}_u(x_1,x_2,t_2,z_1,z_2) \land \textnormal{xor}_u(x_1, x_2, t_2, \neg z_1, \neg z_2)\: \land \\  & \bigwedge_{i=3}^n \left( \textnormal{xor}_u(t_{i-1},x_i,t_i,z_1,z_2) \land \textnormal{xor}_u(t_{i-1},x_i,t_i,\neg z_1, \neg z_2)\right) \: \land\\
                         &(z_1 \lor z_2 \lor t_n) \land (\neg z_1 \lor \neg z_2 \lor \neg t_n),
\end{align*}
where
\begin{align*}
  \textnormal{xor}_u(o_1,o_2,o,l_1,l_2) :=\: &(l_1 \lor l_2 \lor \neg o_1 \lor o_2 \lor o) \land (l_1 \lor l_2 \lor o_1 \lor \neg o_2 \lor o) \: \land \\
   &(l_1 \lor l_2 \lor \neg o_1 \lor \neg o_2 \lor \neg o) \land (l_1 \lor l_2 \lor o_1 \lor o_2 \lor \neg o).
\end{align*}
We restate the following result without a proof.
\begin{theorem}[\cite{BeyersdorffCJ19}]\label{thm:qparityhard}
$\textsc{QUParity}_n$ requires exponential-size refutations in Long-Distance QU-Resolution.
\end{theorem}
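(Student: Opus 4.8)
The plan is to prove the lower bound by \emph{strategy extraction}, following the template sketched above for QU-Resolution but taking care of the extra features of Long-Distance QU-Resolution. First I would invoke the strategy extraction theorem for $LQU^+$: from any refutation $\pi$ of a false PCNF $\Phi$ one can extract, for each universal variable $u$, a function $f_u$ that together form a universal winning strategy, and, crucially, each $f_u$ is computed by a \emph{decision list} whose length and the width of whose individual conditions are bounded polynomially in $|\pi|$~\cite{BalabanovJ12}. Since every condition tests only a constant number of literals, such a decision list rewrites into a constant-depth (depth-$3$) circuit of size $\poly(|\pi|)$: the output is $1$ exactly when some term $c_i$ with value $v_i = 1$ fires while all earlier terms $c_j$ fail, which is an $\mathsf{OR}$ of $\mathsf{AND}$s of small terms and their negations.

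Second, I would pin down the winning strategy of $\textsc{QUParity}_n$ by a direct case analysis of the matrix. After the existential player has fixed $x_1, \dots, x_n$, the universal player wins if and only if it responds $z_1 = z_2 = x_1 \oplus \cdots \oplus x_n$: setting both variables to $\mathrm{parity}(x)$ activates exactly one of the two dual copies of the chain of $\textnormal{xor}_u$ gadgets (the copy guarded by $z_1,z_2$ when the parity is $0$, the one guarded by $\neg z_1, \neg z_2$ when it is $1$), which forces $t_n = \mathrm{parity}(x)$ on every existential completion, while the matching closing clause among $(z_1 \lor z_2 \lor t_n)$ and $(\neg z_1 \lor \neg z_2 \lor \neg t_n)$ then demands the complementary value of $t_n$ and falsifies the matrix; any other response (in particular $z_1 \neq z_2$, or the ``wrong'' common value) leaves the gadgets and both closing clauses simultaneously satisfiable. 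Because $z_1$ lies immediately to the right of $x_1, \dots, x_n$ and to the left of all $t_i$ in the prefix, $f_{z_1}$ depends only on $x_1, \dots, x_n$, and by the above it must equal $x_1 \oplus \cdots \oplus x_n$.

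Third, I would combine the two ingredients with a circuit lower bound. The constant-depth circuit extracted from $f_{z_1}$ computes $n$-bit parity, so by H{\aa}stad's switching lemma its size is $2^{\Omega(n^{1/(d-1)})}$ for the fixed depth $d$ coming from the extraction template; as this size is $\poly(|\pi|)$, we conclude $|\pi| = 2^{\Omega(n^{\epsilon})}$, the desired exponential lower bound.

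The main obstacle, and the whole reason the formula is stated as $\textsc{QUParity}_n$ rather than the simpler $\textsc{QParity}_n$, is making strategy extraction robust against the \emph{merging} of complementary universal literals that Long-Distance QU-Resolution permits. In plain $\textsc{QParity}_n$ a refutation can introduce a tautological clause containing $z \lor \neg z$, which corresponds to the universal player declining to commit and would destroy the extraction of a genuine, parity-computing strategy. The doubling to $z_1, z_2$ together with the two dual copies of each gadget is engineered precisely so that a merged universal literal can occur only inside a clause already satisfied by that literal, so the merge carries no information and cannot relax the commitment $z_1 = z_2 = \mathrm{parity}(x)$. Verifying this invariant, namely that along every branch of the extraction the merges allowed by the long-distance and universal-pivot rules never loosen the value forced by the gadgets, is the technical heart of the argument and is what upgrades the $\textsc{QParity}_n$ bound to one for Long-Distance QU-Resolution.
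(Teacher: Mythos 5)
The paper itself does not prove this theorem; it is imported as a black-box citation from~\cite{BeyersdorffCJ19}, so your proposal has to be measured against the proof given there. Your steps 2 and 3 are fine: the unique universal winning strategy of $\textsc{QUParity}_n$ is indeed $z_1 = z_2 = x_1 \oplus \cdots \oplus x_n$, and H{\aa}stad's lower bound for parity against bounded-depth circuits is the right final ingredient. The genuine gap is step 1. You invoke a strategy extraction theorem for Long-Distance QU-Resolution producing \emph{decision lists with small conditions} (hence depth-$3$ circuits of size $\poly(|\pi|)$), citing Balabanov and Jiang. That reference predates Long-Distance QU-Resolution (which was only introduced by Balabanov, Widl, and Jiang in 2014) and covers Q-Resolution and Long-Distance Q-Resolution only; more importantly, no such bounded-depth extraction is known for \emph{any} of the long-distance systems. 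Merged literals $u^*$ make the extracted universal responses depend on the history of the play, so the strategies one can extract have circuit depth growing with the proof, and unbounded-depth extraction is useless here because parity has small circuits of unbounded depth. There is also a self-defeating consistency check: if Long-Distance QU-Resolution had generic decision-list strategy extraction, then the undoubled $\textsc{QParity}_n$ formulas (whose unique strategy is also parity) would already be exponentially hard for it, making the doubled construction pointless --- yet the hardness of $\textsc{QParity}_n$ for long-distance systems was left open in~\cite{BeyersdorffCJ19}. So the approach, as stated, proves too much.

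What the cited proof actually does is a formula-specific structural argument rather than an appeal to generic extraction: one analyzes which long-distance steps are even possible in a refutation of $\textsc{QUParity}_n$ and shows that merged clauses cannot contribute to deriving the empty clause. Roughly, merged literals $z_1^*, z_2^*$ can only be created by resolving dual gadget clauses over an $x$-pivot (merging over $t$-pivots is forbidden by the side condition, since $z_1, z_2$ are left of the $t_i$); the resulting clauses still carry $t$-literals, which can then neither be resolved away against $z$-carrying clauses (again blocked by the side condition) nor removed by universal reduction of $z_1^*, z_2^*$ (blocked by the $t$-literals being to the right). This collapses every refutation to an essentially merge-free, i.e.\ QU-Resolution, refutation, to which the decision-list extraction plus H{\aa}stad argument applies. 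You correctly identify this invariant as ``the technical heart,'' but your proposal leaves it entirely unproven, and your formulation of it (``a merged universal literal can occur only inside a clause already satisfied by that literal'') conflates a syntactic property of clauses with a semantic one and is not well-posed. Without that lemma, nothing in your outline distinguishes $\textsc{QUParity}_n$ from $\textsc{QParity}_n$, which is exactly the distinction the theorem requires.
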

At the same time, the \textsc{QUParity} formulas have a very simple structure that can be exploited by symbolic proof systems.
\begin{lemma}
  The class $\{\textsc{QUParity}_n\}_{n \in \mathbb{N}}$ has bounded pathwidth.
\end{lemma}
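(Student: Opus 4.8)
The plan is to exhibit an explicit path decomposition of the primal graph of $\textsc{QUParity}_n$ of constant width by sweeping the auxiliary variables $t_i$ from left to right. The key observation is that the matrix, while large, has very local structure: each $\textnormal{xor}_u$ gadget mentions only the two ``input'' variables, one ``output'' variable, and the two universal variables $z_1, z_2$. In particular, the universals $z_1$ and $z_2$ occur in \emph{every} clause gadget, so they must sit in every bag; this contributes a constant ($+2$) to the width but does not grow with $n$.

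Concretely, I would first determine the primal graph. Its vertices are $x_1, \dots, x_n$, $z_1, z_2$, and $t_2, \dots, t_n$. The edges come from three sources: (i) the initial gadgets on $(x_1, x_2, t_2, z_1, z_2)$, giving a clique on $\{x_1, x_2, t_2, z_1, z_2\}$; (ii) the chain gadgets on $(t_{i-1}, x_i, t_i, z_1, z_2)$ for $3 \le i \le n$, each giving a clique on $\{t_{i-1}, x_i, t_i, z_1, z_2\}$; and (iii) the two final clauses on $\{z_1, z_2, t_n\}$, which add only edges already implied by incidence with $z_1, z_2, t_n$. The point is that each $x_i$ appears in exactly one gadget, and each $t_i$ appears in at most two consecutive gadgets (as output of the $i$-th and input of the $(i+1)$-th), so the graph is essentially a ``path of cliques'' threaded along the $t_i$ chain, with the two universals $z_1, z_2$ attached to everything.

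The construction I would give is a sequence of bags $p_2, p_3, \dots, p_n$, setting
\begin{align*}
  \lambda(p_2) &:= \{z_1, z_2, x_1, x_2, t_2\}, \\
  \lambda(p_i) &:= \{z_1, z_2, t_{i-1}, x_i, t_i\} \quad \text{for } 3 \le i \le n.
\end{align*}
Each bag has size $5$, so the width is $4$. I would then verify the three path-decomposition axioms: every vertex appears (the $x_i$ and $t_i$ are placed by inspection, $z_1, z_2$ everywhere); every edge of the primal graph is covered, since each gadget's clique is contained in a single bag and the final-clause edges live in $\lambda(p_n)$ together with the ubiquitous $z_1, z_2$; and the connectivity (interval) condition holds because $z_1, z_2$ occur in all bags, each $x_i$ occurs in exactly one bag, and each $t_i$ occurs only in the consecutive bags $p_i$ and $p_{i+1}$, hence in a contiguous interval.

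There is essentially no hard part here; the statement is a structural bookkeeping claim, and the main thing to get right is the interval condition for the shared variables $t_i$ — ensuring that the ``output'' $t_i$ of gadget $i$ and the ``input'' $t_i$ of gadget $i+1$ land in adjacent bags so that $t_i$'s occurrences form a contiguous block. Once the bags above are fixed, this is immediate from their definition. I would close by remarking that the width $4$ is independent of $n$, which is exactly what ``bounded pathwidth'' requires.
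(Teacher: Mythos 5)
Your proof is correct and takes essentially the same approach as the paper: your bags are exactly the paper's bags (up to an index shift), keeping $z_1, z_2$ in every bag and threading the $t_i$ chain through consecutive bags, yielding width $4$. The only cosmetic difference is that the paper includes a final (redundant) bag $\{z_1, z_2, t_n\}$ for the last two clauses, whereas you correctly observe these are already covered by your last bag.
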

\begin{proof}
  Let $n \in \mathbb{N}$ and consider the path $P = p_1, \ldots, p_n$ and node labeling $\lambda$ such that $\lambda(p_1) = \{x_1, x_2, t_2, z_1, z_2 \}$, $\lambda(p_i) = \{t_i, x_{i+1}, t_{i+1}, z_1, z_2\}$ for $2 \leq i < n$, as well as $\lambda(p_n) = \{z_1, z_2, t_n\}$.
  It is straightforward to verify that $(P, \lambda)$ is a path decomposition of $\textsc{QUParity}_n$, and its width is $4$.
\end{proof}
Since $\textsc{QUParity}_n$ only has three quantifier blocks, we obtain the following results by Proposition~\ref{prop:twpanvardi} and Theorem~\ref{thm:qparityhard}.
\begin{corollary}
  The formulas $\textsc{QParity}_n$ have polynomial-size $\ocef$ refutations.
\end{corollary}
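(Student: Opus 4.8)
The plan is to apply the corollary to Proposition~\ref{prop:twpanvardi} (the translation of {\sc QBDD}'s running time into $\ocef$-refutation size), for which I need two ingredients: that $\textsc{QParity}_n$ is a false PCNF formula with a bounded number of quantifier blocks, and that it has bounded pathwidth. Recall that $\textsc{QParity}_n$ has prefix $\exists x_1 \ldots \exists x_n \forall z \exists t_2 \ldots \exists t_n$ and a matrix consisting of the xor-chain clauses $\textnormal{xor}(x_1,x_2,t_2) \land \bigwedge_{i=3}^n \textnormal{xor}(t_{i-1},x_i,t_i)$ together with the two clauses $(z \lor t_n) \land (\neg z \lor \neg t_n)$; it is false, with the parity function as its unique universal winning strategy, so it admits an $\ocef$-refutation at all. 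Its prefix consists of exactly three quantifier blocks ($\exists$, then $\forall$, then $\exists$), so the quantifier block count $q = 3$ is constant.

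First I would exhibit a path decomposition of bounded width. Consider the path $P = p_1, \ldots, p_n$ with $\lambda(p_1) = \{x_1, x_2, t_2\}$, $\lambda(p_{i-1}) = \{t_{i-1}, x_i, t_i\}$ for $3 \le i \le n$, and $\lambda(p_n) = \{z, t_n\}$. Every xor clause $\textnormal{xor}(t_{i-1},x_i,t_i)$ is covered by the bag containing $\{t_{i-1}, x_i, t_i\}$, the first xor clause is covered by $\lambda(p_1)$, and both final clauses are covered by $\lambda(p_n)$, so the vertex- and edge-covering conditions hold. For contiguity, each existential $x_i$ occurs in a single bag, each auxiliary variable $t_i$ occurs in at most two consecutive bags (the one introducing it and the one using it as the left input of the next xor, with $t_n$ reused in $\lambda(p_n)$), and crucially the universal variable $z$ occurs only in the final bag. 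Since every bag has size at most $3$, the width of $(P, \lambda)$ is $2$, so $\textsc{QParity}_n$ has pathwidth at most $2$.

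With $k = 2$ and $q = 3$ both constant, the corollary to Proposition~\ref{prop:twpanvardi} yields an $\ocef$-refutation of size $\mathsf{tower}(2, 4)\,\poly(|\textsc{QParity}_n|)$, which is polynomial in $|\textsc{QParity}_n|$ because $\mathsf{tower}(2,4)$ is a constant. The step I expect to require the most care is checking the contiguity axiom of the path decomposition; the key observation that makes it go through is that the universal variable $z$ is shared only with $t_n$ via the two final clauses, so it can be placed at the very end of the path without forcing any other variable to span a larger interval. Everything else is a direct instantiation of the bounded-pathwidth, bounded-alternation machinery already developed, exactly as in the treatment of $\textsc{QUParity}_n$ but simpler, since $\textsc{QParity}_n$ carries a single universal variable and thinner clauses.
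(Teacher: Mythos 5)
Your proof is correct, and it runs on exactly the machinery the paper intends: bounded pathwidth plus a constant number of quantifier blocks, fed into Proposition~\ref{prop:twpanvardi} via its corollary bounding $\ocef$-refutation size by $\mathsf{tower}(k,q+1)\,\poly(|\Phi|)$. The one genuine difference is which formulas you decompose. The paper never defines $\textsc{QParity}_n$ and never gives a path decomposition for it; its corollary is obtained by combining Proposition~\ref{prop:twpanvardi} with the preceding lemma, which exhibits a width-$4$ path decomposition of $\textsc{QUParity}_n$, so the statement's mention of $\textsc{QParity}_n$ is best read as a slip for $\textsc{QUParity}_n$ (which is also the formula needed for the ensuing separation from Long-Distance QU-Resolution via Theorem~\ref{thm:qparityhard}). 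You instead took the statement literally, reconstructed the original $\textsc{QParity}_n$ formulas from~\cite{BeyersdorffCJ19}, and supplied your own explicit width-$2$ decomposition; that decomposition checks out (every xor-clause lies inside a single bag, each $t_i$ spans two consecutive bags, and the universal variable $z$ sits only in the final bag together with $t_n$), the prefix indeed has $q=3$ blocks, and falsity holds since the matrix forces $t_n$ to equal the parity of the $x_i$, so the universal player wins by playing $z$ equal to that parity (uniqueness of the strategy, which you also assert, is not needed here). Your bound $\mathsf{tower}(2,4)\,\poly(|\textsc{QParity}_n|)$ is then a constant times a polynomial, as required. So your argument proves the literal statement while the paper's one-line derivation proves the intended one; both are instantiations of the same bounded-pathwidth, bounded-alternation result, and enlarging your bags by $\{z_1,z_2\}$ essentially recovers the paper's width-$4$ decomposition for $\textsc{QUParity}_n$.
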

\begin{theorem}
  QU-Resolution does not $p$-simulate $\ocef$.
\end{theorem}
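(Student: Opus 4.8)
The plan is to establish the separation by exhibiting a single family of formulas that is simultaneously easy for $\ocef$ and hard for QU-Resolution. The natural candidate is exactly the family $\{\textsc{QUParity}_n\}_{n \in \mathbb{N}}$ that has been set up in the preceding lemmas. First I would observe that by Theorem~\ref{thm:qparityhard}, these formulas require exponential-size refutations in Long-Distance QU-Resolution. Since Long-Distance QU-Resolution $p$-simulates QU-Resolution (QU-Resolution is a restriction that forbids tautological resolvents and universal pivots beyond what the stronger system allows), any QU-Resolution refutation is in particular a refutation in the stronger system up to polynomial blow-up; hence the exponential lower bound transfers directly, and $\textsc{QUParity}_n$ also requires exponential-size refutations in QU-Resolution.

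Next I would invoke the upper bound side. By the lemma establishing that $\{\textsc{QUParity}_n\}$ has bounded pathwidth (pathwidth $4$) together with the observation that these formulas have only three quantifier blocks, Proposition~\ref{prop:twpanvardi} applies: {\sc QBDD} solves $\textsc{QUParity}_n$ in time $\mathsf{tower}(4, 4)\,\poly(|\textsc{QUParity}_n|)$, which is $\poly(|\textsc{QUParity}_n|)$ since the tower term is a constant independent of $n$. Because runs of {\sc QBDD} are $\ocef$-derivations, this yields polynomial-size $\ocef$-refutations of $\textsc{QUParity}_n$. I would cite the corollary already stated for $\textsc{QParity}_n$ and note that the same argument applies verbatim to $\textsc{QUParity}_n$, whose matrix differs only by the addition of the negated-literal copies and the extra universal variable, leaving pathwidth and quantifier-block count bounded.

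Finally I would conclude: if QU-Resolution did $p$-simulate $\ocef$, then the polynomial-size $\ocef$-refutations of $\textsc{QUParity}_n$ would give rise to polynomial-size QU-Resolution refutations, contradicting the exponential lower bound just established. Therefore QU-Resolution does not $p$-simulate $\ocef$.

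The main obstacle, such as it is, is purely a matter of bookkeeping rather than a deep difficulty: one must make sure that the hardness result quoted for Long-Distance QU-Resolution genuinely implies hardness for the weaker QU-Resolution. This is immediate from the fact that the weaker system's refutations embed into the stronger one, but it is the single step that needs to be stated explicitly to make the argument airtight. The rest is a direct combination of the upper and lower bounds, and the polynomiality of the $\ocef$ upper bound hinges only on the tower term $\mathsf{tower}(4,4)$ being a constant, which it is because both the pathwidth and the number of quantifier blocks are bounded uniformly in $n$.
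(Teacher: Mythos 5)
Your proposal is correct and is essentially the argument the paper intends: it combines the exponential lower bound for $\textsc{QUParity}_n$ in Long-Distance QU-Resolution (Theorem~\ref{thm:qparityhard}), which transfers to the weaker QU-Resolution, with the polynomial-size $\ocef$ refutations obtained from bounded pathwidth and three quantifier blocks via Proposition~\ref{prop:twpanvardi}. The paper leaves this combination implicit (and its corollary even contains a typo, writing $\textsc{QParity}_n$ for $\textsc{QUParity}_n$), so your explicit treatment of the embedding of QU-Resolution refutations into Long-Distance QU-Resolution fills in exactly the right detail.
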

Next, we look at the expansion-based proof system \emph{IR-calc}~\cite{BeyersdorffCJ19} (the proof rules are shown in Figure~\ref{fig:ircalc} of Appendix~\ref{app:proofsystems}).
For classes of formulas with a bounded number of quantifier blocks, lower bounds against IR-calc can be obtained by considering the \emph{strategy size}, which is the minimum range of any universal winning strategy (as a function mapping assignments of existential variables to assignments of universal variables)~\cite{BeyersdorffB20}.

\begin{definition}[Strategy Size~\cite{BeyersdorffB20}]
  The \emph{strategy size} $S(\Phi)$ of a false QBF~$\Phi$ is the minimum cardinality of the range of a universal winning strategy for $\Phi$.
\end{definition}
\begin{theorem}[\cite{BeyersdorffB20}]\label{thm:sizelowerbound}
  A false PCNF formula $\Phi$ with at most $k$ universal quantifier blocks requires IR-calc proofs of size $\sqrt[k]{S(\Phi)}$.
\end{theorem}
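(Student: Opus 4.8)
The plan is to establish the bound via strategy extraction: from any IR-calc refutation of $\Phi$ of size $s$ I would extract a universal winning strategy whose range has cardinality at most $s^k$. This immediately yields $S(\Phi) \le s^k$, and hence $s \ge \sqrt[k]{S(\Phi)}$, which is the claimed lower bound.

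First I would recall the relevant structure of IR-calc. Its lines are annotated clauses whose existential literals $l$ carry an annotation $\sigma_l$, a partial assignment to the universal variables occurring to the left of $l$ in the prefix, recording the universal expansion under which the literal was instantiated (cf.~\cite{BeyersdorffCJ19}). The crucial combinatorial observation is that a refutation of size $s$ contains at most $s$ annotated literals, and therefore at most $s$ distinct annotation assignments overall. Next I would invoke the standard strategy extraction for expansion-based calculi~\cite{BeyersdorffCJ19}: from an IR-calc refutation one reads off a universal winning strategy whose output on the universal variables is drawn from the annotations occurring in the proof. Concretely, for each universal quantifier block $U_j$ the strategy's value on $U_j$ is the restriction to $U_j$ of one of the annotations appearing in the refutation; since there are at most $s$ distinct annotations, there are at most $s$ possible assignments to $U_j$.

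With at most $k$ universal blocks, a full assignment of the universal variables in the range of the extracted strategy is obtained by selecting one admissible value per block, so the range has size at most $s^k$. Because $S(\Phi)$ is by definition the minimum range over all universal winning strategies for $\Phi$, we conclude $S(\Phi) \le s^k$, i.e.\ $s \ge \sqrt[k]{S(\Phi)}$. This completes the argument modulo the details of the extraction step.

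The main obstacle is making the strategy-extraction step precise and confirming the blockwise product bound. I would need to verify that the value assigned to each block $U_j$ is genuinely the restriction of a \emph{single} proof annotation—rather than a combination of several distinct annotations, which could inflate the count—and that the values on the $k$ blocks can be chosen independently, so that the product bound $s^k$ (and not some larger quantity) is correct. This requires a careful reading of how annotations propagate through the instantiation and resolution rules of IR-calc and of how the extraction procedure reconciles conflicting annotations along a branch; once this is in place, the remaining counting is routine.
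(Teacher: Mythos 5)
First, a point of reference: the paper does not prove this theorem at all — it is imported from \cite{BeyersdorffB20} and stated explicitly without proof (the paper's own contribution is only to apply it to the formulas $\textsc{EQ}'_n$). So there is no in-paper proof to compare against, and your proposal has to be judged as a reconstruction of the cited argument. Its outline is the right one: extract a universal winning strategy from a size-$s$ IR-calc refutation, bound the cardinality of its range by $s^k$, and conclude $s \ge \sqrt[k]{S(\Phi)}$ directly from the definition of strategy size.

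However, there is a genuine gap at the quantitative heart of the argument. Your counting claim — ``a refutation of size $s$ contains at most $s$ annotated literals, and therefore at most $s$ distinct annotation assignments overall'' — is unjustified, and under the size measure this paper adopts for clausal systems (the number of clauses in the refutation) it is simply false: a single IR-calc clause may contain many literals carrying pairwise distinct annotations, so the number of distinct annotations is bounded only by the number of literal \emph{occurrences}, which exceeds the number of clauses by a factor of the clause width $w$. Running your argument with that bound yields only $S(\Phi) \le (sw)^k$, i.e.\ a lower bound of $\sqrt[k]{S(\Phi)}/w$, which is not the stated theorem. Closing this hole is exactly the technical content of \cite{BeyersdorffB20}: one must analyze the round-based strategy extraction for IR-calc (which is more delicate than the $\forall$Exp+Res case you implicitly invoke, since IR-calc annotations are \emph{partial} assignments that must be completed by default values) and show that in each round the response on block $U_j$ is determined by a single clause of the restricted proof, so that each block contributes at most $s$ candidate responses independently of clause width; only then does the product bound $s^k$ follow. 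You explicitly defer precisely this verification in your final paragraph, but that step is not a detail to be checked — it is the proof. As written, the proposal is a correct plan whose decisive lemma is missing.
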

We use this correspondence to establish a proof size lower bound for the following class of formulas, which is a variant of the \emph{equality formulas}~\cite{BeyersdorffBH19} obtained by splitting the ``long'' clause $(t_1 \lor \ldots \lor t_n)$ into smaller clauses using auxiliary variables~$e_i$:
\begin{align*}
  \textsc{EQ}'_n := &\exists x_1 \ldots \exists x_n \forall u_1 \ldots \forall u_n \exists t_1 \ldots \exists t_n \exists e_1 \ldots \exists e_n.\\ & \bigwedge_{i=1}^n \left((x_i \lor u_i \lor \neg t_i) \land (\neg x_i \lor \neg u_i \lor \neg t_i)\right) \land \\
  &(t_1 \lor e_1) \land \bigwedge_{i=2}^{n-1} (\neg e_{i-1} \lor t_i \lor e_i) \land (\neg e_{n-1} \lor t_n)
\end{align*}
\begin{lemma}\label{lem:eqstrategy}
  $\textsc{EQ}'_n$ is false and the function $\vec{f}: \sigma \mapsto \vec{f}(\sigma)$ with $\vec{f}(\sigma)(u_i) = \sigma(x_i)$ for $1 \leq i \leq n $ is the unique universal winning strategy.
\end{lemma}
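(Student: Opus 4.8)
The plan is to reduce the entire statement to a single residual satisfiability question: for a fixed assignment $\sigma$ to the block $x_1,\dots,x_n$ and a fixed assignment to the universal block $u_1,\dots,u_n$, when is the matrix satisfiable in the remaining existential variables $t_1,\dots,t_n,e_1,\dots,e_{n-1}$? First I would analyze, for fixed $x_i,u_i$, the pair of clauses $(x_i \lor u_i \lor \neg t_i)\land(\neg x_i\lor\neg u_i\lor\neg t_i)$. A short case check shows these are together equivalent to the implication $t_i = 1 \Rightarrow x_i \neq u_i$; equivalently, $t_i$ is forced to $0$ when $x_i=u_i$ and is otherwise unconstrained. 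In particular, a value $t_i=1$ is available exactly at the positions where $x_i\neq u_i$.

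Next I would show that the chain $(t_1\lor e_1)\land\bigwedge_{i=2}^{n-1}(\neg e_{i-1}\lor t_i\lor e_i)\land(\neg e_{n-1}\lor t_n)$ is satisfiable in the $e$-variables if and only if $\bigvee_{i=1}^n t_i = 1$. For the ``only if'' direction, setting all $t_i=0$ forces $e_1=1$ through the first clause and then, by an immediate induction along the implications $e_{i-1}\Rightarrow e_i$, forces $e_{n-1}=1$, which falsifies the final clause $(\neg e_{n-1}\lor t_n)$; for the ``if'' direction, if some $t_k=1$ one satisfies all clauses by setting $e_1=\dots=e_{k-1}=1$ and $e_k=\dots=e_{n-1}=0$. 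Combining the two analyses yields the key characterisation: with $x,u$ fixed, the matrix is satisfiable in $(t,e)$ if and only if there is an index $i$ with $x_i\neq u_i$, i.e.\ if and only if $x\neq u$ as Boolean vectors; equivalently, it is \emph{unsatisfiable} precisely when $x=u$.

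From this characterisation both claims follow quickly. For falseness, the strategy $\vec f$ with $\vec f(\sigma)(u_i)=\sigma(x_i)$ makes $u=x$ for every $\sigma$, so the residual matrix is unsatisfiable and hence falsified for every extension to $t,e$; thus $\vec f$ is a universal winning strategy and $\textsc{EQ}'_n$ is false. For uniqueness I would argue at the level of the induced play. Fix any winning strategy $\vec g$, and note that since each $g_{u_i}$ depends only on $x_1,\dots,x_n,u_1,\dots,u_{i-1}$, the response $\vec g(\tau)$ is a function of the $x$-part $\sigma$ of $\tau$ alone. If for some $\sigma$ the response $u=\vec g(\tau)$ satisfied $u_i\neq\sigma(x_i)$ for some $i$, then $x\neq u$, and the characterisation provides values of $t,e$ satisfying the matrix; taking $\tau$ to agree with $\sigma$ on the $x$-variables and with this witness on $t,e$ yields a play $\tau\cup\vec g(\tau)$ that \emph{satisfies} the matrix, contradicting that $\vec g$ is winning. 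Hence every winning strategy induces the map $\tau\mapsto(\tau(x_1),\dots,\tau(x_n))$, which is exactly $\vec f$.

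The main obstacle is the chain analysis of the second paragraph: one must verify carefully that the auxiliary variables $e_i$ faithfully encode the ``long'' disjunction $t_1\lor\dots\lor t_n$ in both directions, since this is the only place where the splitting into small clauses (as opposed to a single clause $t_1\lor\dots\lor t_n$) could in principle matter. A secondary point is the precise sense of uniqueness: because $g_{u_i}$ may depend on $x_1,\dots,x_n,u_1,\dots,u_{i-1}$, a winning strategy is pinned down only on the plays it actually reaches, so the clean invariant to establish is uniqueness of the induced assignment map $\tau\mapsto\vec g(\tau)$ rather than of the individual functions $g_{u_i}$ on their unreachable inputs.
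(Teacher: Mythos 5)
Your proposal is correct and follows essentially the same route as the paper's proof: the paper also argues that under $u_i = x_i$ the first block of clauses forces all $t_i$ to $0$ and the chain clauses then propagate to a contradiction, and that any deviating universal response leaves the matrix satisfiable in the $t_i, e_i$, so copying is the only winning play. Your write-up merely makes explicit two points the paper leaves terse --- the verification that the $e$-chain faithfully encodes $t_1 \lor \dots \lor t_n$, and the observation that uniqueness should be read as uniqueness of the induced response map $\tau \mapsto \vec{g}(\tau)$ (which depends only on the $x$-part) rather than of the strategy functions on unreachable inputs --- both of which are sound refinements.
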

\begin{proof}
  Given any assignment $\sigma$ of the existential variables $x_i$, applying the joint assignment $\sigma \cup \vec{f}(\sigma)$ results in unit clauses $\bigwedge_{i=1}^n (\neg t_i)$, and unit propagation derives a contradiction. Thus $\vec{f}$ is a universal winning strategy and $\textsc{EQ}'_n$ is false.
Consider an assignment $\sigma$ of the $x_i$ together with an assignment $\tau$ of the $u_i$ such that $\sigma(x_i) \neq \tau(u_i)$ for some $i$.
  It is not difficult to see that the formula obtained by applying $\sigma \cup \tau$ can be satisfied by assigning the $t_i$ and $e_i$ appropriately, so the universal player can only win the evaluation game if they play according to~$\vec{f}$.
\end{proof}
\begin{proposition}
  Any IR-calc refutation of $\textsc{EQ}'_n$ has size $\Omega(2^n)$.
\end{proposition}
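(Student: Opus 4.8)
The goal is to apply Theorem~\ref{thm:sizelowerbound}, which says that a false PCNF formula with at most $k$ universal quantifier blocks requires IR-calc proofs of size $\sqrt[k]{S(\Phi)}$. The formula $\textsc{EQ}'_n$ has exactly one universal quantifier block (the $\forall u_1 \ldots \forall u_n$ block), so $k=1$ and the bound simplifies to $S(\textsc{EQ}'_n)$. Thus the entire task reduces to computing the strategy size, i.e.\ the minimum cardinality of the range of a universal winning strategy for $\textsc{EQ}'_n$, and showing it is $\Omega(2^n)$.

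The plan is to use Lemma~\ref{lem:eqstrategy}, which establishes that $\textsc{EQ}'_n$ is false and that the function $\vec{f}$ with $\vec{f}(\sigma)(u_i) = \sigma(x_i)$ is the \emph{unique} universal winning strategy. Because the winning strategy is unique, the range appearing in the definition of $S(\textsc{EQ}'_n)$ is forced to be exactly the range of $\vec{f}$, so I do not need to minimize over competing strategies. It then remains to count the range of $\vec{f}$ as a function of the assignments to the existential variables. Here the key observation is that $\vec{f}$ copies the assignment of $x_1, \ldots, x_n$ onto $u_1, \ldots, u_n$: for each of the $2^n$ distinct assignments $\sigma$ to $x_1, \ldots, x_n$, the strategy produces a distinct assignment to the universal variables. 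Hence the range of $\vec{f}$ has exactly $2^n$ elements, giving $S(\textsc{EQ}'_n) = 2^n$.

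Combining these pieces, $S(\textsc{EQ}'_n) = 2^n$ and $k = 1$, so Theorem~\ref{thm:sizelowerbound} yields an IR-calc proof-size lower bound of $\sqrt[1]{2^n} = 2^n = \Omega(2^n)$, as claimed.

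The only subtlety to watch is making sure the range is counted with respect to the correct domain: the strategy size is defined in terms of the universal winning strategy viewed as a map from assignments of existential variables to assignments of universal variables, and one must verify that the values of $\vec{f}$ genuinely depend on all $n$ bits $x_1, \ldots, x_n$ so that no collapse in the range occurs. Since $\vec{f}(\sigma)(u_i) = \sigma(x_i)$ is literally a bijection between assignments of $\{x_1, \ldots, x_n\}$ and assignments of $\{u_1, \ldots, u_n\}$, this is immediate and constitutes the crux of the argument; everything else is bookkeeping.
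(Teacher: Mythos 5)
Your proposal is correct and follows essentially the same route as the paper: invoke Lemma~\ref{lem:eqstrategy} to get the unique winning strategy $\vec{f}$, count its range as $2^n$ to conclude $S(\textsc{EQ}'_n) = 2^n$, and apply Theorem~\ref{thm:sizelowerbound}. Your explicit remark that the formula has a single universal block (so $k=1$ and the $k$-th root disappears) is a detail the paper leaves implicit, but it is the same argument.
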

\begin{proof}
  By Lemma~\ref{lem:eqstrategy} the function $\vec{f}$ is the unique universal winning strategy for $\textsc{EQ}'_n$, and the cardinality of its range is $2^n$. Thus $2^n = S(\textsc{EQ}'_n)$ is a proof size lower bound for IR-calc by Theorem~\ref{thm:sizelowerbound}.
\end{proof}
\begin{lemma}
 The class $\{\textsc{EQ}'_n\}_{n \in \mathbb{N}}$ has bounded pathwidth.
\end{lemma}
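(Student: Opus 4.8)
The plan is to bound the pathwidth of the matrix of $\textsc{EQ}'_n$ directly by exhibiting an explicit path decomposition of its primal graph $G_n$ of constant width. First I would read off $G_n$ from the clauses. The two clauses indexed by $i$ in the first conjunction both have variable set $\{x_i, u_i, t_i\}$, so together they contribute a triangle on these three variables. The clause $(t_1 \lor e_1)$ contributes the edge $t_1 e_1$; each clause $(\neg e_{i-1} \lor t_i \lor e_i)$ for $2 \le i \le n-1$ contributes a triangle on $\{e_{i-1}, t_i, e_i\}$; and $(\neg e_{n-1} \lor t_n)$ contributes the edge $e_{n-1} t_n$. Hence $G_n$ consists of $n$ vertex-disjoint \emph{local} triangles $\{x_i, u_i, t_i\}$, whose apexes $t_i$ are threaded onto a single \emph{backbone} path $t_1 - e_1 - t_2 - e_2 - \cdots - e_{n-1} - t_n$ built from the $e$-variables.

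The structural observation driving the construction is that the variables $x_i$ and $u_i$ are purely local---each occurs only in block $i$ and is adjacent only to $t_i$---while the only vertices that need to be carried between consecutive blocks are the $e_i$'s, and each $e_i$ is shared by exactly two adjacent blocks (it appears as $e_i$ in the clause for index $i$ and as $e_{(i+1)-1}$ in the clause for index $i+1$). I would therefore take the path $P = p_1, \ldots, p_n$ with bags $\lambda(p_1) = \{x_1, u_1, t_1, e_1\}$, $\lambda(p_i) = \{e_{i-1}, t_i, e_i, x_i, u_i\}$ for $2 \le i \le n-1$, and $\lambda(p_n) = \{e_{n-1}, t_n, x_n, u_n\}$. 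Every bag has at most five vertices, so the width of $(P, \lambda)$ is at most $4$, uniformly in $n$.

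To finish I would verify the three path-decomposition axioms. Coverage of vertices and of edges is immediate from the bag definitions, since each local triangle $\{x_i, u_i, t_i\}$ and each backbone triangle or edge is contained in a single bag. The only condition requiring a moment's care---and the closest thing to an obstacle---is connectivity (condition~3): I would check that each of $x_i, u_i, t_i$ occurs in only the single bag $\lambda(p_i)$, while $e_i$ occurs in exactly the two consecutive bags $\lambda(p_i)$ and $\lambda(p_{i+1})$, so that every variable's occurrences form a contiguous interval of $P$. As this holds for all $n$, the pathwidth of $\textsc{EQ}'_n$ is bounded by the constant $4$, which is precisely the bounded-pathwidth claim. (If one insists on regarding the prefix variable $e_n$ as occurring in the matrix, it is an isolated vertex and can simply be added to $\lambda(p_n)$ without raising the width beyond a constant.)
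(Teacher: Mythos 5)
Your proposal is correct and uses exactly the same path decomposition as the paper: the bags $\lambda(p_1) = \{x_1, u_1, t_1, e_1\}$, $\lambda(p_i) = \{e_{i-1}, x_i, u_i, t_i, e_i\}$, $\lambda(p_n) = \{e_{n-1}, x_n, u_n, t_n\}$ of width $4$. Your write-up is in fact more detailed than the paper's, which states the bags and leaves the verification of the three axioms (and the irrelevance of the unused variable $e_n$) implicit.
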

\begin{proof}
  For $n \in \mathbb{N}$, we construct a path decomposition $(P, \lambda)$ of $\textsc{EQ}'_n$ as follows.
  We let $P = p_1, \ldots, p_n$ and define the labeling $\lambda$ as $\lambda(p_1) = \{x_1, u_1, t_1, e_1\}$, $\lambda(p_i) = \{e_{i-1}, x_i, u_i, t_i, e_i\}$ for $2 \leq i \leq n-1$, and $\lambda(p_n) = \{e_{n-1}, x_n, u_n, t_n\}$.
\end{proof}
\begin{corollary}
  The formulas $\textsc{EQ}'_n$ have polynomial-size $\ocef$ refutations.
\end{corollary}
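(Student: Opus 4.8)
The plan is to combine the two lemmas just established with Proposition~\ref{prop:twpanvardi}, exactly as was done for the $\textsc{QUParity}_n$ formulas. First I would observe that the preceding lemma shows $\{\textsc{EQ}'_n\}_{n \in \mathbb{N}}$ has bounded pathwidth: the explicit path decomposition $(P,\lambda)$ has every bag of size at most $5$, so its width is at most $4$, independent of $n$. Second, I would note that $\textsc{EQ}'_n$ is false by Lemma~\ref{lem:eqstrategy}, and that its quantifier prefix $\exists x_1 \ldots \exists x_n \forall u_1 \ldots \forall u_n \exists t_1 \ldots \exists t_n \exists e_1 \ldots \exists e_n$ consists of exactly three quantifier blocks, so $q = 3$ is a fixed constant independent of $n$.

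With these two facts in hand, the corollary follows directly. By Proposition~\ref{prop:twpanvardi}, {\sc QBDD} solves $\textsc{EQ}'_n$ in time $\mathsf{tower}(k, q+1)\,\poly(|\textsc{EQ}'_n|)$ with $k \le 4$ and $q = 3$. Since both $k$ and $q$ are absolute constants, the factor $\mathsf{tower}(k, q+1) = \mathsf{tower}(4, 4)$ is itself a constant, so the running time is $\poly(|\textsc{EQ}'_n|)$. As remarked in the paragraph introducing Proposition~\ref{prop:twpanvardi}, every run of {\sc QBDD} corresponds to an $\ocef$-derivation, and since $\textsc{EQ}'_n$ is false, this run terminates in the constant $0$ OBDD and is therefore an $\ocef$-refutation. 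The size of this refutation is bounded by the running time, hence polynomial in $|\textsc{EQ}'_n|$.

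I do not expect any genuine obstacle here, since the corollary is a straightforward instantiation of the general upper bound. The only point requiring a moment's care is confirming that the number of quantifier blocks is a constant: although $\textsc{EQ}'_n$ has $\Theta(n)$ variables, they are organized into just three alternation-free blocks, which is what Proposition~\ref{prop:twpanvardi} quantifies over. This parallels the $\textsc{QParity}_n$ case, where three blocks likewise sufficed. Consequently the proof reduces to citing the bounded-pathwidth lemma, noting $q = 3$, and applying Proposition~\ref{prop:twpanvardi}.
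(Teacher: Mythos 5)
Your proposal is correct and matches the paper's (implicit) argument exactly: the corollary follows by combining the bounded-pathwidth lemma (width $4$), the fact that $\textsc{EQ}'_n$ is false with $q=3$ quantifier blocks, and Proposition~\ref{prop:twpanvardi} together with the observation that {\sc QBDD} runs are $\ocef$-derivations. The paper treats this as immediate (it even states a corollary right after Proposition~\ref{prop:twpanvardi} packaging the $\mathsf{tower}(k,q+1)\poly(|\Phi|)$ bound as an $\ocef$-refutation size bound, which you could cite directly), so no further detail is needed.
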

\begin{theorem}
  IR-calc does not $p$-simulate $\ocef$.
\end{theorem}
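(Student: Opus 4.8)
The plan is to obtain this separation as an immediate consequence of the two results immediately preceding it: the $\Omega(2^n)$ lower bound on IR-calc refutations of $\textsc{EQ}'_n$ and the polynomial upper bound on $\ocef$-refutations of the same family. This is the standard recipe for deriving a non-simulation from matching hardness and easiness on a single family of formulas, which I would carry out by contradiction.

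Concretely, I would argue as follows. Suppose, toward a contradiction, that IR-calc $p$-simulates $\ocef$. By the corollary above, each $\textsc{EQ}'_n$ admits an $\ocef$-refutation $\Pi_n$ of size $\poly(|\textsc{EQ}'_n|)$. Since the prefix and matrix of $\textsc{EQ}'_n$ have length $\poly(n)$, we have $|\textsc{EQ}'_n| = \poly(n)$, so the size of $\Pi_n$ is $\poly(n)$ as well. By the assumed $p$-simulation there is then an IR-calc refutation $\Pi'_n$ of $\textsc{EQ}'_n$ whose length is polynomial in that of $\Pi_n$, hence $\poly(n)$. This directly contradicts the preceding proposition, which asserts that every IR-calc refutation of $\textsc{EQ}'_n$ has size $\Omega(2^n)$. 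The contradiction shows that no such $p$-simulation exists.

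The one point that requires care — and the only genuine obstacle — is that the $p$-simulation is being invoked across two different kinds of proof systems: a clausal system (IR-calc, whose size counts clauses) and an OBDD-based system ($\ocef$, whose size is the total size of the OBDDs). I would make explicit that the general definition of $p$-simulation stated earlier applies verbatim in this cross-system setting, since it only asks for an IR-calc refutation whose length is polynomial in that of the given $\ocef$-refutation, irrespective of how the two sizes are individually measured. Once this is noted, the implication ``$\ocef$-size $=\poly(n)$ $\Rightarrow$ IR-calc-size $=\poly(n)$'' is immediate, and its clash with the exponential lower bound closes the argument with no further estimates needed.
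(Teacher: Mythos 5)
Your proposal is correct and matches the paper's (implicit) argument exactly: the theorem follows immediately from the exponential IR-calc lower bound for $\textsc{EQ}'_n$ and the polynomial-size $\ocef$-refutations of the same family, precisely as you derive by contradiction. Your added remark that the definition of $p$-simulation applies across heterogeneous proof systems (clausal versus OBDD-based) is a sensible clarification but not a gap in the paper's reasoning.
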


\section{A Lower Bound on OBDD Refutations}
In this section, we present a technique for proving lower bounds on the size of OBDD-proofs even with the entailment $(\models)$ rule.
 We first show that such proofs admit efficient extraction of universal winning strategies as \emph{OBDD-decision lists}, a model which can in turn be efficiently transformed into \emph{rectangle decision lists}.
We then use a result by Impagliazzo and Williams~\cite{ImpagliazzoW10} to show that lower bounds for such decision lists reduce to size bounds of rectangles for Boolean functions. We complete the proof by deriving such a bound for a generalization of the well-known inner product function.
\subsection{From OBDD Proofs to Rectangle Decision Lists}
\begin{definition}
  Let $\mathcal{C}$ be a class of Boolean functions.
  A \emph{$\mathcal{C}$-decision list} of length~$s$ is a sequence $(L_1, c_1), \dots, (L_s, c_s)$ where the $c_i \in \{0, 1\}$ are truth values and the $L_i \in \mathcal{C}$ are circuits, and $L_s$ computes the constant function $1$. Let~$V$ be the set of variables occurring in the circuits~$L_i$.
The decision list computes a function $f: \{0, 1\}^V \rightarrow \{0, 1\}$ as follows. Given an assignment $\tau: V \rightarrow \{0, 1\}$, let $i = \min\{1 \leq j \leq s \;|\: L_j(\tau) = 1\}$. The we have $f(\tau) = c_i$.
\end{definition}
A \emph{$(w,\pi)$-OBDD-decision list} is a $\mathcal{C}$-decision list where $\mathcal{C}$ is the class of Boolean functions computed by $\pi$-OBDDs of maximum width~$w$.
Similarly, for a partition $(X,Y)$ of a set $V$ of variables, an \emph{$(X,Y)$-rectangle decision list} is a $\mathcal{C}$-decision list where $\mathcal{C}$ is the class of rectangles with respect to $(X,Y)$.

The next result states that OBDD-decision lists can be efficiently extracted from OBDD-proofs. Due to space constraints, its proof is in Appendix~\ref{app:strategyextraction}.
\begin{theorem}[Strategy Extraction~\cite{BalabanovJ12,BeyersdorffBCP20}]\label{thm:strategyextraction} There is a linear-time algorithm that takes a $\pi$-OBDD-refutation of a PCNF formula~$\Phi$ and outputs a family of $(w, \pi)$-OBDD-decision lists computing a universal winning strategy for~$\Phi$, where~$w$ is the width of the refutation.
\end{theorem}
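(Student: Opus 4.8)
The plan is to extract, for each universal variable $u \in \var_\forall(\Phi)$, a $(w,\pi)$-OBDD-decision list that computes the response $f_u$ of a universal winning strategy, by walking through the refutation and turning each derived OBDD into a guarded response. The central idea behind strategy extraction for $\forall$-reduction calculi is that the $\forall$-reduction steps are exactly the moments where the universal player commits to a value: when a line $L_i$ is derived as $L_j[u/c]$, this records that whenever the current assignment makes $L_j$ false but $L_j[u/c]$ still false-consistent, setting $u$ to the value that keeps the matrix falsified is correct. Concretely, I would process the derivation $L_1,\dots,L_k$ and, for the universal variable $u$, collect the reduction steps that eliminate $u$; each such step of the form $L_i = L_j[u/c]$ contributes a pair $(L_i', \bar c)$ to the decision list for $u$, where $L_i'$ is an appropriate $\pi$-OBDD guard (built from $L_i$ by restricting to assignments of $D_\Phi(u)$, or by using $L_i$ directly since by the rightmost-variable side condition of the $\forall$-rule all variables in $L_j$ are left of or equal to $u$). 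The final entry is $(1, \ast)$ as required by the definition, giving a default response.

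First I would fix the bookkeeping: since the refutation ends in an OBDD for $0$, unwinding it gives, for any existential assignment $\tau$, a falsifying total assignment, and the $\forall$-reduction steps along the relevant branch determine the universal responses. The key structural fact I would exploit is the side condition on the $(\forall)$ rule, namely that $u$ is \emph{rightmost} among the variables of $L_j$; this guarantees that every variable appearing in the guard OBDD $L_i$ is to the left of $u$ in the prefix, which is exactly the dependency requirement $f_u(\tau) = f_u(\sigma)$ whenever $\tau,\sigma$ agree on $D_\Phi(u)$ in the definition of a universal strategy. The width bound is immediate from Theorem~\ref{thm:OBDDrectangles} and the apply algorithm of Lemma~\ref{lem:OBDDproperties}: each guard is obtained from a line of the refutation by restriction and minimization (Lemma~\ref{lem:canonicity}), operations that do not increase width, so each guard is a $\pi$-OBDD of width at most $w$, the width of the refutation. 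Linear running time follows because we make a single pass over the derivation, and each restriction/minimization step is polynomial (in fact the construction of the list is a direct read-off of the $\forall$-reduction steps).

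The main obstacle is establishing \emph{correctness}: that the decision lists so constructed genuinely compute a \emph{winning} strategy, i.e.\ that for every existential assignment $\tau$, the responses dictated by the first guard that fires on $\tau \cup \vec f(\tau)$ jointly falsify the matrix. This requires a soundness-style invariant running backwards through the derivation, asserting that each derived OBDD $L_i$ is falsified under the partial strategy read off from the reductions occurring at or after step $i$; the $(\land)$, $(\exists)$, $(\models)$, and base-clause steps must all be shown to preserve this invariant, with the $(\forall)$ step being where a universal value is actually assigned. The subtle point is handling the interaction between multiple universal variables and the ordering of reduction steps, since the guards for different $u$ must be mutually consistent on overlapping assignments; I expect this to reduce to a careful induction mirroring the soundness argument of Proposition~\ref{prop:soundness}, and because the statement is attributed to prior work (\cite{BalabanovJ12,BeyersdorffBCP20}) the proof will largely adapt the known $\forall$-reduction strategy-extraction argument to the OBDD setting, with Lemma~\ref{lem:canonicity} and Lemma~\ref{lem:OBDDproperties} ensuring the lines stay within the width budget.
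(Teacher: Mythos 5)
Your overall architecture matches the paper's proof: one decision list per universal variable $u$, with entries read off from the $\forall$-reduction steps on $u$ in proof order, a constant final entry, the rightmost-variable side condition supplying the dependency requirement, width bounded by the refutation width, linear time by a single pass, and a backwards induction over the derivation for correctness. However, there are two genuine gaps. First, your concrete definition of the entries is wrong. For a reduction step $L_i = L_j[u/c]$ you add $(L_i', \bar{c})$ with $L_i'$ equal to $L_i$ or a restriction of it, whereas the correct entry is $(\neg L_i, c)$: the guard is the \emph{negation} of the reduced line, and the response is the substituted constant $c$ itself. The semantic content of a reduction step is: if the assignment to variables left of $u$ \emph{falsifies} $L_i = L_j[u/c]$, then playing $u := c$ makes $L_j$ evaluate exactly as $L_i$, i.e., keeps it falsified. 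With your polarity the extracted functions need not be winning. Take $\exists x \forall u.\, (x \lor u) \land (\neg x \lor u)$ with the refutation $L_3 = L_1[u/0] = x$, $L_4 = L_2[u/0] = \neg x$, $L_5 = L_3 \land L_4 = 0$: your list $(x, 1), (\neg x, 1), (1, \ast)$ answers $u = 1$ on $x = 1$, which \emph{satisfies} the matrix, while the paper's list $(\neg x, 0), (x, 0), (1, 1)$ answers $u = 0$, which falsifies it.

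Second, the correctness argument, which is the actual substance of the theorem, is deferred rather than carried out, and the invariant you state (``each derived OBDD $L_i$ is falsified under the partial strategy read off from the reductions occurring at or after step $i$'') is not the right one; it is false as stated, since an axiom line or a line derived by conjunction need not itself be falsified. The paper's invariant is: for every $m \leq i \leq k$, the \emph{conjunction} $\varphi_i = \bigwedge_{j=1}^i L_j$ is falsified by every assignment $\tau \cup \vec{f}(\tau)$ consistent with the \emph{full} extracted strategy $\vec{f}$, proved by downward induction from $i = k$ (trivial, since $L_k = 0$) to $i = m$ (which yields that the matrix is falsified, i.e., $\vec{f}$ is winning). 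The $(\land)$, $(\exists)$, $(\models)$ cases go through because $\varphi_{i-1} \models L_i$, and the $(\forall)$ case is exactly where the entry $(\neg L_i, c)$ is used: if $\varphi_{i-1}$ were satisfied but $L_i$ falsified, then all earlier guards in $\mathsf{L}_u$ are falsified and $\neg L_i$ is satisfied, so $f_u$ outputs $c$, whence $L_j$ (a conjunct of $\varphi_{i-1}$) is falsified, a contradiction. Also, the ``mutual consistency of guards for different universal variables'' that you flag as a subtle point dissolves in one line: each guard only mentions variables left of $u$, so there are no circular dependencies and $\vec{f}(\tau)$ is well defined by evaluating the lists in prefix order. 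A minor remark: the width bound needs neither Theorem~\ref{thm:OBDDrectangles} nor the apply algorithm; the guards are negations of proof lines, and negating an OBDD just swaps its sinks.
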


\begin{lemma}\label{lem:rectanglelist}
  If there is a $(w, \pi)$-OBDD-decision list of size~$s$ computing a function $f: \{0,1\}^V \rightarrow \{0, 1\}$, and $(X,Y)$ is a bipartition of $V$ such that $X$ is the set of variables appearing in a prefix of $\pi$, then there is an $(X, Y)$-rectangle decision list of length $w (s-1) + 1$ computing $f$.
\end{lemma}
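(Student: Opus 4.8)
The plan is to decompose each OBDD in the list into a bounded disjunction of rectangles via Theorem~\ref{thm:OBDDrectangles}, and then expand the list one block at a time, relying on the observation that within each block of rectangles coming from a single OBDD the decision value stays constant.

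First I would apply Theorem~\ref{thm:OBDDrectangles} to each of the OBDDs $L_1, \dots, L_{s-1}$. Since $X$ is a prefix of the order $\pi$ and each $L_i$ has width at most $w$, the theorem yields rectangles $R_{i,1}, \dots, R_{i,w}$ with respect to the partition $(X,Y)$ such that $L_i = \bigvee_{j=1}^w R_{i,j}$ (padding with the empty rectangle, i.e.\ the constant $0$, if fewer than $w$ rectangles are produced; the constant $0$ is a rectangle and never fires, so it does not affect the output). The final entry $L_s$ is the constant $1$, which is itself the trivial rectangle $1 \land 1$, so I leave it untouched.

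Next I would form the $(X,Y)$-rectangle decision list by replacing each pair $(L_i, c_i)$ with $i < s$ by the block $(R_{i,1}, c_i), \dots, (R_{i,w}, c_i)$, concatenating these blocks in the original order $i = 1, \dots, s-1$, and appending the single pair $(1, c_s)$ at the end. This list has exactly $w(s-1) + 1$ entries, its last term computes the constant $1$, and every term is a rectangle with partition $(X,Y)$, so it is a syntactically valid $(X,Y)$-rectangle decision list.

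Finally I would verify that it computes $f$. Fix an assignment $\tau$ and let $i = \min\{j : L_j(\tau) = 1\}$, so that $f(\tau) = c_i$. For every $i' < i$ we have $L_{i'}(\tau) = 0$, hence $\bigvee_j R_{i',j}(\tau) = 0$ and $\tau$ satisfies no rectangle in any earlier block; for block $i$ we have $L_i(\tau) = 1$, so $\tau$ satisfies at least one $R_{i,j}$. Consequently the first rectangle triggered by $\tau$ lies in block $i$, and since all rectangles in that block carry the label $c_i$, the rectangle list outputs $c_i = f(\tau)$; if $i = s$, no rectangle in the first $s-1$ blocks fires and the trailing pair $(1, c_s)$ supplies the value $c_s$. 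The only point requiring care is this block-alignment argument, which hinges on the disjunctive form $L_i = \bigvee_j R_{i,j}$: an OBDD evaluates to $1$ exactly when one of its rectangles does, so the ``first match'' in the expanded list always falls in the same block as the first OBDD that evaluates to $1$, which is precisely what makes the two lists agree.
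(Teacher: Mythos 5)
Your proof is correct and takes essentially the same approach as the paper's: apply Theorem~\ref{thm:OBDDrectangles} to decompose each $L_i$ ($i < s$) into a disjunction of $w$ rectangles with respect to $(X,Y)$, replace each pair $(L_i, c_i)$ by the block of $w$ rectangle-pairs carrying the same label $c_i$, and append the trivial final rectangle, for a total length of $w(s-1)+1$. The only difference is that you spell out the block-alignment correctness argument and the padding detail explicitly, both of which the paper leaves implicit.
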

\begin{proof}
  Let $(L_1, c_1), \ldots, (L_s, c_s)$ be a $(w, \pi)$-OBDD-decision list computing function $f: \{0, 1\}^V \rightarrow \{0, 1\}$, and let $(X, Y)$ be a bipartition of $V$ such $X$ corresponds to the variables in a prefix of $\pi$. By Theorem~\ref{thm:OBDDrectangles}, each OBDD $L_i$ for $1 \leq i < s$ is equivalent to a disjunction $\bigvee_{j=1}^w R_{ij}(V)$ of rectangles with respect to $(X, Y)$.
  We construct an $(X,Y)$-rectangle decision list by replacing each pair $(L_i, c_i)$ for $1 \leq i < s$ by the sequence $(R_{i1}, c_i), \ldots, (R_{iw}, c_i)$. We can simply append $(L_s, c_s)$ to this sequence since the constant $L_s$ trivially is a rectangle.
  The resulting $(X,Y)$-rectangle decision list computes $f$ and has length $w(s-1)+1$.
\end{proof}
\subsection{From Rectangle Decision Lists to Communication Complexity}

We next use a result of Impagliazzo and Williams~\cite{ImpagliazzoW10} to prove lower bounds for rectangle decision lists.
The following definition has been slightly simplified for our setting.
\begin{definition}
 Let $f$ be a Boolean function on variables $V$ and let $\Pi = (X, Y)$ be a partition of $V$. An \emph{AND-protocol} for $f$ with partition $\Pi$ is the following: two players are given an assignment to $X$ and $Y$, respectively, and want to compute $f$ on the joint assignment. To this end, they play in several rounds. In each round, they deterministically compute one bit each and send it to a third party. The third party computes the conjunction of the two bits and sends it to the players. If the conjunction evaluates to $1$, then the protocol ends and the players have to output the value of $f$ on the given input.
 
 The \emph{length} of the AND-protocol is the maximal number of rounds the players have to play to compute $f$ taken over all possible inputs for $f$.
\end{definition}

AND-protocols are interesting for us because of the following simple connection already observed without proof by Chattopadhyay et al.~\cite{ChattopadhyayMM20}.

\begin{proposition}\label{prop:rectanglesand}
 Let $f$ be a function in variables $V$ and let $\Pi$ be a partition of~$V$. If $f$ is computed by a rectangle decision list of length $s$ in which all rectangles have the partition~$\Pi$, then there is an AND-protocol for $\varphi$ with partition $\Pi$ of length at most $s$.
\end{proposition}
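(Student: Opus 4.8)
The plan is to directly translate the rectangle decision list into an AND-protocol by having the two players jointly simulate scanning the list from top to bottom, using the AND-oracle to detect the first rectangle that fires. Concretely, suppose $f$ is computed by a rectangle decision list $(R_1, c_1), \dots, (R_s, c_s)$ with $R_i(V) = R_i^1(X) \land R_i^2(Y)$ all having partition $\Pi = (X, Y)$. Player one holds an assignment $\tau_X$ to $X$ and player two holds $\tau_Y$ to $Y$, and together they must output $f(\tau_X \cup \tau_Y) = c_i$, where $i$ is the least index with $R_i(\tau_X \cup \tau_Y) = 1$. The crucial observation is that $R_i(\tau_X \cup \tau_Y) = 1$ if and only if $R_i^1(\tau_X) = 1$ \emph{and} $R_i^2(\tau_Y) = 1$, which is exactly the conjunction of one locally computable bit from each player.

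The protocol then proceeds in rounds $i = 1, 2, \dots, s$. In round $i$, player one computes the bit $R_i^1(\tau_X)$ (which depends only on their private input $X$) and player two computes $R_i^2(\tau_Y)$; each sends their bit to the third party, who returns the conjunction. If the conjunction is $1$, the protocol halts, both players know from the public round counter that index $i$ is the first firing rectangle, and they output $c_i$. If the conjunction is $0$, they continue to round $i+1$. Since the last rectangle $R_s$ is the constant $1$ (so $R_s^1 \equiv R_s^2 \equiv 1$), the conjunction is guaranteed to evaluate to $1$ by round $s$ at the latest, so the protocol always terminates within $s$ rounds and correctly outputs $f$.

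The three things I would check carefully are correctness, termination, and the round bound. Termination and the bound of $s$ rounds follow immediately from the definition of a decision list ($L_s$, here $R_s$, computes the constant $1$) together with the fact that each round corresponds to testing exactly one list entry. Correctness follows because the first round in which the oracle returns $1$ is precisely the smallest index $i$ with both $R_i^1(\tau_X) = 1$ and $R_i^2(\tau_Y) = 1$, i.e.\ the smallest $i$ with $R_i(\tau_X \cup \tau_Y) = 1$, which by the semantics of the decision list determines $f(\tau_X \cup \tau_Y) = c_i$; both players can output $c_i$ because the round index at which the protocol halts is public knowledge (the oracle's answers are broadcast), and $c_i$ is a constant known to both in advance.

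There is essentially no hard part here: the proposition is a direct unwinding of the two definitions, and the only point that needs a sentence of care is that each player can in fact locally compute their half of each rectangle. This is immediate because a rectangle with partition $\Pi = (X, Y)$ factors as $R_i^1(X) \land R_i^2(Y)$ by definition, so $R_i^1$ depends only on the $X$-variables (player one's input) and $R_i^2$ only on the $Y$-variables (player two's input). Hence no communication beyond the single AND-bit per round is needed, matching the AND-protocol model exactly, and the length is at most $s$ as claimed.
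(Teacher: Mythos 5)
Your proof is correct and follows exactly the same approach as the paper's: both simulate the decision list line by line, with each player locally evaluating their half of the rectangle and the AND-oracle revealing whether the current rectangle fires, halting at the first $1$. Your write-up is simply a more detailed version, spelling out termination via the constant final rectangle and the public round counter, which the paper leaves implicit.
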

\begin{proof}
 The players simply evaluate the rectangle decision list: for every line $(R_{i}, c_i)$ where $R_i = R_{i,1}(X_1) \land R_{i,2}(X_2)$, the players evaluate $R_{i,1}$ and $R_{i,2}$ on their part of the input individually. Then the third party gives them the conjunction, so the value of the rectangle on the input. If it is $1$, then the players know that $f$ evaluates to $c_i$ on their input.
\end{proof}

Lower bounds on the length of AND-protocols can be shown thanks to the following result from~\cite{ImpagliazzoW10}.

\begin{theorem}\label{thm:andtosize}
 Let $f$ be a function in variables $V$ and let $\Pi$ be a balanced partition of $V$. If $f$ has an AND-protocol with partition $\Pi$ of length $s$, then there is a monochromatic rectangle with respect to $f$ with partition $\Pi$ of size at least $\frac{1}{4es} 2^{|V|}$.
\end{theorem}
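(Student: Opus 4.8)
The plan is to translate the protocol into a decision list of rectangles, argue that the associated colour classes partition the input space into few monochromatic pieces, and then extract an honest rectangle from the largest piece while paying only a constant factor.

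First I would read the rectangles off the protocol. Since the third party only ever returns the conjunction of the two bits, and this conjunction equals $0$ until the protocol halts, the public transcript after $t-1$ surviving rounds is the fixed string $0^{t-1}$, independently of the input. Hence the bit $a_t(X)$ sent by the first player in round $t$ depends only on $X$ (given that round $t$ is reached), and likewise $b_t(Y)$ depends only on $Y$. Writing $A_t = \{X \mid a_t(X)=1\}$ and $B_t = \{Y \mid b_t(Y)=1\}$, the round-$t$ conjunction is $1$ exactly on the rectangle $R_t = A_t \times B_t$, and an input $(X,Y)$ halts in the first round $t$ with $(X,Y) \in R_t$. The value output on halting is determined by the public transcript $0^{t-1}1$ alone, hence is a fixed bit $v_t$, so correctness forces $f$ to be constant equal to $v_t$ on the set $T_t := R_t \setminus \bigcup_{i<t} R_i$ of inputs halting in round $t$. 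The key subtlety, and the reason the bound is interesting, is that the AND-feedback hides the individual bits: the transcript class $T_t$ is \emph{not} a rectangle, but a rectangle with earlier rectangles removed.

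Because the protocol always halts within $s$ rounds, the sets $T_1, \dots, T_s$ partition $\{0,1\}^V$ into at most $s$ monochromatic pieces, so by averaging some $T_{t^*}$ has at least $N/s$ inputs, where $N = 2^{|V|}$. Since $T_{t^*} \subseteq R_{t^*}$ and $f \equiv v_{t^*}$ on $T_{t^*}$, it now suffices to extract from $R_{t^*}$ a monochromatic \emph{subrectangle} of size at least $\tfrac{1}{4e}\,|T_{t^*}| \ge \tfrac{1}{4es} N$, which is where the constant $4e$ is paid. I would attempt this by a probabilistic restriction of the rows $A_{t^*}$ and columns $B_{t^*}$ of $R_{t^*}$: the balancedness of $\Pi$ keeps both sides large enough that a random balanced subrectangle retains a constant fraction of the $T_{t^*}$-mass in expectation, with the $1/e$ factor arising from a standard $(1-1/k)^k$ estimate bounding the chance of landing inside one of the removed earlier rectangles.

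I expect this extraction to be the main obstacle. A region of the form rectangle-minus-many-subrectangles can be arbitrarily swiss-cheesed, and then its largest all-monochromatic subrectangle is far smaller than a constant fraction of its area, so a purely local counting argument cannot yield a constant-factor loss. The argument must instead exploit the global decision-list structure, namely that the removed regions are exactly the earlier $R_i$ and that together with $R_{t^*}$ they are organised as a first-hit list, to guarantee that a suitably chosen random subrectangle is monochromatic with probability bounded below by an absolute constant. Making this probabilistic selection yield precisely the $\tfrac{1}{4e}$ factor, uniformly over all balanced partitions $\Pi$, is the delicate part of the proof.
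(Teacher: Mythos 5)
The first thing to note is that the paper contains no proof of Theorem~\ref{thm:andtosize} to compare against: it is imported verbatim from Impagliazzo and Williams~\cite{ImpagliazzoW10}, so your argument has to stand entirely on its own. Its first half does stand: the observation that the referee's feedback is constantly $0$ until the protocol halts, so that the round-$t$ bits depend only on the players' own inputs, each round defines a rectangle $R_t = A_t \times B_t$, the halting classes $T_t = R_t \setminus \bigcup_{i<t} R_i$ partition the input space, and correctness forces $f$ to be constant on each $T_t$ --- this is the standard translation (essentially Proposition~\ref{prop:rectanglesand} read in reverse) and is fine.

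The gap is in the second half, and it is not a fixable delicacy: the intermediate claim you propose to prove is false. You want to take the largest class $T_{t^*}$ (of size at least $2^{|V|}/s$ by averaging) and extract from it a monochromatic subrectangle of size at least $\frac{1}{4e}\,|T_{t^*}|$. Consider the following protocol on $n+n$ variables: the first $m = \Theta(2^n)$ rounds use single-point rectangles $R_i = \{(x_i,y_i)\}$ with the points chosen uniformly at random (player one sends the bit $[x = x_i]$, player two sends $[y=y_i]$), each colored $0$, and round $s = m+1$ uses $R_s = \{0,1\}^n \times \{0,1\}^n$, colored $1$. A union bound over all $2^{2^{n+1}}$ rectangles shows that for a suitable $m \approx 31\cdot 2^n$, every rectangle of size at least $\frac{1}{8e}2^{2n}$ contains one of the points. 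Now $T_{t^*} = T_s$ is the whole space minus the $m$ points, so $|T_{t^*}| \approx 2^{2n}$, yet every rectangle contained in $T_s$ must avoid all the points and hence has size below $\frac{1}{8e}2^{2n} < \frac{1}{4e}|T_{t^*}|$; no probabilistic restriction can do better, because nothing of the required size exists. Note also that this counterexample \emph{is} a first-hit rectangle decision list, so your hope that the global decision-list structure rescues a constant-factor extraction is unfounded. The theorem itself survives in this example --- a single row with the points deleted is a monochromatic rectangle of size about $2^n \approx 2^{2n}/(4es)$ --- which pinpoints the conceptual issue: the bound $\frac{1}{4es}2^{|V|}$ is inherently relative to the whole space divided by $s$, and cannot be obtained by first paying $1/s$ via averaging and then only a constant factor inside one halting class. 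A correct proof has to play the $s$ rounds off against one another (when the earlier rectangles are few and large one can carve out large pieces, and when they are many and small, $s$ is already large), rather than isolating the largest $T_{t^*}$ and working locally inside it.
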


\subsection{A Function with Only Small Monochromatic Rectangles}

With Theorem~\ref{thm:andtosize}, showing lower bounds for rectangle decision lists, and thus for OBDD-refutations, boils down to showing that functions to not have small monochromatic rectangles. Such function are known in the literature, see e.g.~\cite{KushilevitzN97}, but all results that we are aware of are for a fixed partition of the variables. However, since we want to show lower bounds independent of the choice of the variable order used in the OBDD-refutation, we need functions that have no big monochromatic rectangles for \emph{any} balanced partition of their variables.  We will construct such functions in this section.

The following result will be a building block in our construction.
\begin{proposition}\label{prop:matchingIP}
 Let $F:= \bigoplus_{i_\in [n]} g_i(x_i, y_i)$ where every function $g_i$ is either $g_i= x_i \land y_i$, $g_i = \neg x_i \land y_i$, $g_i = x_i \land \neg y_i$, or $g_i = x_i \lor y_i$. Then every monochromatic rectangle of $F$ has size at most $2^n$.
\end{proposition}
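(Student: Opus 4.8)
The plan is to prove the bound via the classical spectral (Lindsey-type) rectangle bound for inner-product-style functions, after observing that each admissible $g_i$ contributes a $2\times 2$ Hadamard block. Fix the bipartition $\Pi=(X_1,X_2)$ underlying the rectangles and consider the $\pm1$ communication matrix $M$ of $F$, with rows indexed by assignments $a$ to $X_1$, columns by assignments $b$ to $X_2$, and $M[a,b]=(-1)^{F(a\cup b)}$. The whole argument hinges on the product structure of $F$: since $F=\bigoplus_{i\in[n]}g_i(x_i,y_i)$, we have $(-1)^{F}=\prod_{i\in[n]}(-1)^{g_i(x_i,y_i)}$, so $M$ factors as a tensor product of the $2\times 2$ sign matrices $M_i$ of the individual $g_i$ — provided the partition places $x_i$ and $y_i$ on opposite sides. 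This separation is the relevant case, and the one I would assume; it is precisely the property that the graph-based generalization announced in the introduction is designed to force, and it is exactly what makes the tensor factorization available.

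Next I would verify that each of the four admissible functions yields a Hadamard block. Writing the sign matrix with rows indexed by $x_i$ and columns by $y_i$, the four cases give the matrices $\bigl(\begin{smallmatrix}1&1\\1&-1\end{smallmatrix}\bigr)$, $\bigl(\begin{smallmatrix}1&-1\\1&1\end{smallmatrix}\bigr)$, $\bigl(\begin{smallmatrix}1&1\\-1&1\end{smallmatrix}\bigr)$, and $\bigl(\begin{smallmatrix}1&-1\\-1&-1\end{smallmatrix}\bigr)$, each with orthogonal rows of norm $\sqrt{2}$, that is, $M_i M_i^{\top}=2I_2$. (The excluded two-variable functions such as $x_i\oplus y_i$ fail this, which is why the statement restricts $g_i$ to exactly these four forms.) Consequently $M=\bigotimes_{i\in[n]}M_i$ satisfies $M M^{\top}=\bigotimes_{i}(M_i M_i^{\top})=2^{n}I_{2^n}$; since its entries are $\pm1$, $M$ is a Hadamard matrix of order $2^n$, all of whose singular values equal $2^{n/2}$, so $\|M\|=2^{n/2}$ in operator norm.

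Finally I would deduce the rectangle bound directly from this spectral estimate. If $R=A\times B$ is monochromatic with value $c$, then every entry $M[a,b]$ with $(a,b)\in A\times B$ equals $(-1)^c$, so $\bigl|\mathbf{1}_A^{\top}M\,\mathbf{1}_B\bigr|=|A|\,|B|$. On the other hand, Cauchy--Schwarz together with the operator-norm bound gives $\bigl|\mathbf{1}_A^{\top}M\,\mathbf{1}_B\bigr|\le\|M\|\,\|\mathbf{1}_A\|\,\|\mathbf{1}_B\|=2^{n/2}\sqrt{|A|\,|B|}$. Combining the two estimates yields $|A|\,|B|\le 2^{n}$, which is the claimed bound on the size of $R$.

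The four-case matrix verification and the tensor bookkeeping for $M M^{\top}$ are routine; the one genuinely delicate point is the partition hypothesis. The tensor factorization, and hence the Hadamard property driving the whole proof, requires each pair $\{x_i,y_i\}$ to straddle the cut $(X_1,X_2)$; for such partitions the orientation of a block ($M_i$ versus $M_i^{\top}$) is immaterial, since $M_i^{\top}M_i=2I_2$ as well, so the bound $2^{n}$ holds uniformly. I expect pinning down this separation condition to be the main thing to state carefully, as a pair lying entirely on one side of $\Pi$ contributes only a local factor and destroys the clean tensor structure; this is exactly the obstruction that the expander-based generalization in the remainder of the section is built to overcome, yielding order-independent lower bounds.
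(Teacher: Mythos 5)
Your proof is correct, but it takes a genuinely different route from the paper's. The paper proves the proposition by reduction to the classical inner product bound (Lemma~\ref{lem:ipclassical}), used as a black box: it first rewrites $x_i \lor y_i = 1 \oplus (\neg x_i \land \neg y_i)$, which flips the color but not the size of any monochromatic rectangle, and then renames negated variables to unnegated ones; since this renaming is a bijection on assignments acting on each variable independently, it maps rectangles to rectangles of equal size, and the resulting function is exactly $\mathsf{IP}$. You instead prove the bound from scratch by a Lindsey-type spectral argument: each admissible $g_i$ has a $2\times 2$ Hadamard sign matrix (your four-case check is right, and $x_i \oplus y_i$ is correctly identified as the excluded, singular case), the communication matrix of $F$ is the tensor product of these blocks, hence a Hadamard matrix of order $2^n$ with operator norm $2^{n/2}$, and Cauchy--Schwarz turns a monochromatic rectangle $A \times B$ into $|A|\,|B| \le 2^{n/2}\sqrt{|A|\,|B|}$, i.e.\ $|A|\,|B| \le 2^n$. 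What your approach buys is self-containedness --- it effectively reproves the classical lemma the paper only cites --- and a structural explanation of why exactly these four forms of $g_i$ are allowed; what the paper's buys is brevity and the avoidance of any linear algebra. Your caveat about the partition is also well placed: the bound is simply false for bipartitions in which some pair $\{x_i,y_i\}$ lies entirely on one side (already $x_1 y_1 \oplus x_2 y_2$ has a monochromatic rectangle of size $9 > 2^2$ under the partition $(\{x_1,y_1\},\{x_2,y_2\})$), so the proposition must be read, as both you and the paper implicitly do, for partitions in which every pair straddles the cut; this is guaranteed in the paper's only application of the proposition, inside the proof of Lemma~\ref{lem:rectanglesmall}, and your observation that $M_i^{\top} M_i = 2I_2$ makes block orientation irrelevant plays the same role there as the per-variable nature of the paper's renaming bijection.
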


To show Proposition~\ref{prop:matchingIP}, we will use the following well known result from communication complexity: Let $\mathsf{IP}(x_1, \ldots, x_n, y_1, \ldots, y_n)$ be the inner product function defined as $\mathsf{IP}(x_1, \ldots, x_n, y_1, \ldots, y_n) := \bigoplus_{i\in [n]} x_i\cdot y_i$ where $\cdot$ denotes the multiplication over $\{0,1\}$ or equivalently conjunction. The following is well known, see e.g.~\cite{KushilevitzN97}.

\begin{lemma}\label{lem:ipclassical}
 All monochromatic rectangles of $\mathsf{IP}(x_1, \ldots, x_n, y_1, \ldots, y_n)$ have size at most $2^n$.
\end{lemma}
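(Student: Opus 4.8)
The plan is to prove the bound by elementary linear algebra over the field $\mathbb{F}_2$, identifying an assignment of the $x$-variables with a vector $x \in \mathbb{F}_2^n$ and likewise for the $y$-variables, so that $\mathsf{IP}(x,y) = \langle x, y\rangle := \sum_{i \in [n]} x_i y_i$ is the standard bilinear form over $\mathbb{F}_2$. A monochromatic rectangle is then a set $A \times B$ with $A, B \subseteq \mathbb{F}_2^n$ on which $\langle \cdot, \cdot\rangle$ is constant, and the goal is to show $|A|\,|B| \le 2^n$. If $A$ or $B$ is empty the bound is trivial, so I would assume both are nonempty. I would write $B^\perp := \{z \in \mathbb{F}_2^n : \langle z, y\rangle = 0 \text{ for all } y \in B\}$ for the orthogonal complement of the span of $B$, and record the two facts used repeatedly: since the form is nondegenerate on all of $\mathbb{F}_2^n$ we have $\dim B^\perp = n - \dim \operatorname{span}(B)$, and $|B| \le |\operatorname{span}(B)| = 2^{\dim \operatorname{span}(B)}$ because $B \subseteq \operatorname{span}(B)$.

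I would then split into the two possible constant values. First suppose the rectangle is $0$-monochromatic, i.e.\ $\langle x, y\rangle = 0$ for all $x \in A$ and $y \in B$. By bilinearity this says $\operatorname{span}(A) \subseteq B^\perp$, so $\dim \operatorname{span}(A) \le n - \dim \operatorname{span}(B)$; combining this with $|A| \le 2^{\dim \operatorname{span}(A)}$ and $|B| \le 2^{\dim \operatorname{span}(B)}$ gives $|A|\,|B| \le 2^{\dim \operatorname{span}(A) + \dim \operatorname{span}(B)} \le 2^n$.

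The $1$-monochromatic case is the one requiring a small trick, since $A$ need not contain the origin and so need not be a subspace. Here $\langle x, y\rangle = 1$ for all $x \in A$ and $y \in B$. Fixing an arbitrary $x_0 \in A$ and using bilinearity, for every $x \in A$ and $y \in B$ we have $\langle x + x_0, y\rangle = \langle x, y\rangle + \langle x_0, y\rangle = 1 + 1 = 0$, so the translated set $A + x_0 := \{x + x_0 : x \in A\}$ is contained in $B^\perp$. Since translation is a bijection, $|A| = |A + x_0| \le |B^\perp| = 2^{n - \dim \operatorname{span}(B)}$, and multiplying by $|B| \le 2^{\dim \operatorname{span}(B)}$ again yields $|A|\,|B| \le 2^n$.

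I expect no serious obstacle, as this is a classical bound; the only point needing care is the translation step in the $1$-monochromatic case, which is precisely what lets the affine set $A$ be compared against the linear subspace $B^\perp$. As an alternative one could run the standard spectral argument: the $\pm 1$ sign matrix of $\mathsf{IP}$ is a $2^n \times 2^n$ Hadamard matrix, all of whose singular values equal $2^{n/2}$, and Lindsey's lemma bounds the sum of its entries over any $A \times B$ by $2^{n/2}\sqrt{|A|\,|B|}$, forcing $|A|\,|B| \le 2^n$ whenever every entry on $A \times B$ has the same sign. I would nonetheless prefer the linear-algebra proof above, since it is fully self-contained and avoids invoking external matrix inequalities.
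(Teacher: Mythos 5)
Your proof is correct, and there is in fact no in-paper argument to compare against: the paper states Lemma~\ref{lem:ipclassical} as a classical fact and only cites Kushilevitz and Nisan, so your write-up supplies a proof the paper deliberately omits. Your linear-algebra argument over $\mathbb{F}_2$ is one of the two standard routes, and you execute it soundly: in the $0$-monochromatic case $\operatorname{span}(A) \subseteq B^\perp$ gives $|A|\,|B| \le 2^{(n - \dim\operatorname{span}(B)) + \dim\operatorname{span}(B)} = 2^n$, and in the $1$-monochromatic case the translation trick $A + x_0 \subseteq B^\perp$ correctly handles the fact that $A$ is only affine, pairing $|A| \le |B^\perp| = 2^{n-\dim\operatorname{span}(B)}$ with $|B| \le 2^{\dim\operatorname{span}(B)}$. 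The one point that genuinely needs care over $\mathbb{F}_2$ is the dimension formula $\dim W^\perp = n - \dim W$: it holds for every subspace $W$ because the form is nondegenerate (the map $v \mapsto \langle v,\cdot\rangle|_W$ surjects onto the dual of $W$ with kernel $W^\perp$), even though $W \cap W^\perp$ may be nontrivial owing to self-orthogonal vectors; you rely only on this dimension count and never on a direct-sum decomposition, so no step fails. Your alternative spectral sketch (Lindsey's lemma for the $2^n \times 2^n$ Hadamard sign matrix of $\mathsf{IP}$, all singular values $2^{n/2}$, forcing $|A|\,|B| \le 2^n$ on any monochromatic $A \times B$) is essentially the textbook proof behind the citation, so both routes you mention are valid; the linear-algebra version has the advantage of being fully self-contained and of delivering the per-rectangle bound in exactly the form consumed by Proposition~\ref{prop:matchingIP} and, through it, Lemma~\ref{lem:rectanglesmall}.
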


It is easy to see that the function $F$ from Proposition~\ref{prop:matchingIP} is a generalization of the inner product function. We will see that one can easily lift the bound on monochromatic rectangles.

\begin{proof}[of Proposition~\ref{prop:matchingIP}]
 First observe that $x_i \lor y_i = 1 \oplus (\neg x_i \land \neg y_i)$, so substituting every occurrence of $x_i\lor y_i$ by $\neg x_i\land \neg y_i$ will only change the color but not the size of any monochromatic rectangle. So in the remainder, we assume that there is no $g_i= x_i\lor y_i$ in $F$.
 
 In a next step, we substitute all occurrences of negated variables by the respective variables without the negation. Call the resulting formula $F'$. This substitution is clearly a bijection $\sigma$ between assignments that maintains the value, i.e., $F(X, Y) = F'(\sigma(X,Y))$. Since $\sigma$ acts on the variables independently, we have that for every monochromatic rectangle $A\times B$ of $F$, the set $\sigma(A\times B)$ is a monochromatic rectangle as well and $A\times B$ and $\sigma(A\times B)$ have the same size. Now observing that $F'$ is in fact the inner product function completes the proof using Lemma~\ref{lem:ipclassical}.
\end{proof}

We now introduce a generalization of $\mathsf{IP}$ with respect to an underlying graph structure. So let $X$ be a set of Boolean variables and let $G$ be a graph  with vertex set $X$ and edge set $E$. Then we define
\begin{align*}
 \mathsf{IP}_G(X) = \bigoplus_{xy\in E} x\cdot y.
\end{align*}
Note that with this definition $\mathsf{IP} = \mathsf{IP}_{M_n}$ where $M_n$ is a matching with $n$ edges.

\begin{lemma}\label{lem:rectanglesmall}
 Let $G=(X,E)$ be a graph with $n$ variables. Let $\{e_1, \ldots, e_m\}$ be an induced matching of $G$ and let $(X_1, X_2)$ be a partition of $X$ such that for every~$e_i$ one of the end points is in $X_1$ and one is in $X_2$. Then every monochromatic rectangle for $\mathsf{IP}_G$ respecting the partition $(X_1, X_2)$ has size at most $2^{n-m}$.
\end{lemma}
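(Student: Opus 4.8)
The plan is to reduce to Proposition~\ref{prop:matchingIP} by slicing a monochromatic rectangle along the variables outside the matching. Write each matching edge as $e_i = u_i v_i$; by the hypothesis on the partition we may assume $u_i \in X_1$ and $v_i \in X_2$ for every $i$. Let $M = \{u_1, v_1, \ldots, u_m, v_m\}$ be the $2m$ matched vertices and let $Z = X \setminus M$ be the remaining $n - 2m$ variables. Splitting the defining sum of $\mathsf{IP}_G$ according to whether an edge lies inside $M$ or not, I obtain $\mathsf{IP}_G = \bigl(\bigoplus_{i=1}^m u_i v_i\bigr) \oplus h$, where $h$ collects the products over all remaining edges. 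The crucial point is that $\{e_1, \ldots, e_m\}$ is an \emph{induced} matching: the only edges with both endpoints in $M$ are the matching edges themselves, so every edge contributing to $h$ has at least one endpoint in $Z$.

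Next I would fix the $Z$-variables. Write $Z_1 = Z \cap X_1$ and $Z_2 = Z \cap X_2$, and fix assignments $z_1 \in \{0,1\}^{Z_1}$ and $z_2 \in \{0,1\}^{Z_2}$. Because every edge in $h$ touches $Z$, substituting $z_1 \cup z_2$ turns $h$ into an $\mathbb{F}_2$-affine function $\ell$ of the $M$-variables: an edge between two $Z$-vertices becomes a constant, and an edge between an $M$-vertex and a $Z$-vertex becomes a linear term in that single $M$-variable. No product $u_i v_i$, and in particular no cross term $u_i v_j$ with $i \neq j$, survives in $h$, again by the induced-matching property. Hence the restriction $\mathsf{IP}_G[z_1 \cup z_2]$ equals $\bigoplus_i u_i v_i \oplus \ell$ with $\ell = c \oplus \bigoplus_i (\alpha_i u_i \oplus \beta_i v_i)$ for constants $c, \alpha_i, \beta_i \in \{0,1\}$. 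The identity $u_i v_i \oplus \alpha_i u_i \oplus \beta_i v_i = (u_i \oplus \beta_i)(v_i \oplus \alpha_i) \oplus \alpha_i \beta_i$ rewrites each term as one of $u_i \wedge v_i$, $\neg u_i \wedge v_i$, $u_i \wedge \neg v_i$, or $\neg u_i \wedge \neg v_i$; using $\neg u_i \wedge \neg v_i = 1 \oplus (u_i \vee v_i)$ to absorb the last case into the constant, $\mathsf{IP}_G[z_1 \cup z_2]$ is, up to a global constant, exactly a function of the form treated in Proposition~\ref{prop:matchingIP}.

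With this in hand the bound is immediate. Given a monochromatic rectangle $R = A \times B$ for $\mathsf{IP}_G$ with partition $(X_1, X_2)$, for each pair $(z_1, z_2)$ let $A_{z_1}$ (resp. $B_{z_2}$) be the set of restrictions to $X_1 \cap M$ (resp. $X_2 \cap M$) of those assignments in $A$ (resp. $B$) whose $Z_1$-part equals $z_1$ (resp. whose $Z_2$-part equals $z_2$). Each $A_{z_1} \times B_{z_2}$ is a monochromatic rectangle for $\mathsf{IP}_G[z_1 \cup z_2]$ with the partition $(X_1 \cap M, X_2 \cap M)$ (a global constant only flips the color of a rectangle, preserving both monochromaticity and size), so Proposition~\ref{prop:matchingIP} gives $|A_{z_1}| \cdot |B_{z_2}| \le 2^m$. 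Since these slices partition $A$ and $B$, we have $|A| \cdot |B| = \sum_{z_1, z_2} |A_{z_1}| \cdot |B_{z_2}|$, and the number of pairs $(z_1, z_2)$ is $2^{|Z_1| + |Z_2|} = 2^{n - 2m}$; multiplying the per-slice bound therefore yields $|A| \cdot |B| \le 2^{n-2m} \cdot 2^m = 2^{n-m}$, as claimed.

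I expect the main obstacle to be the bookkeeping in the second paragraph, namely verifying that fixing $Z$ really does leave an affine residue with no surviving quadratic monomials. This is exactly where the hypothesis that the matching is \emph{induced}, rather than merely a matching, is used, and it is what makes the completed-square terms match the four allowed shapes of Proposition~\ref{prop:matchingIP}; any additional edge inside $M$ would introduce a cross product that breaks the reduction.
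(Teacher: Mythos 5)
Your proof is correct and follows essentially the same route as the paper's: fix the variables outside the induced matching, observe that each restriction of $\mathsf{IP}_G$ becomes (up to a constant) a function of the form handled by Proposition~\ref{prop:matchingIP}, and sum the resulting per-slice bound $2^m$ over the $2^{n-2m}$ slices. The only cosmetic difference is that you obtain the four allowed term shapes via the $\mathbb{F}_2$ completing-the-square identity $u_iv_i \oplus \alpha_i u_i \oplus \beta_i v_i = (u_i \oplus \beta_i)(v_i \oplus \alpha_i) \oplus \alpha_i\beta_i$, whereas the paper does an explicit case analysis on the neighborhood parities $(p_a(x_i), p_a(y_i))$ --- these are the same computation.
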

\begin{proof}
 Let $X'$ be the variables that are no end point in any of the $e_i$. Fix an assignment $a:X'\rightarrow \{0,1\}$. Let $e_i = x_iy_i$ and assume that $x_i\in X_1$ while $y_i\in X_2$. Let $\mathsf{IP}_{G,a}$ be the function in $X'' :=\{x_i, y_i\mid i\in [m]\}$ that we get from $\mathsf{IP}_G$ by plugging $a$ into the variables $X'$. Let $g_i$ be the function that, given an assignment $a_i$ to $x_i$ and $y_i$, counts the number of edges $e$ modulo $2$ that are incident to at least one of $x_i$ and $y_i$ and such that $a_i \cup a$ assigns $1$ to both end points of $e$. Clearly, $\mathsf{IP}_{G,a} = \bigoplus_{i\in [m]} g_i(x_i, y_i) \oplus c_a$ where $c_a\in \{0,1\}$ is a constant depending only on $a$. We will show that, up to the constant $c_a$ which does not change the size of monochromatic rectangles, the function $\mathsf{IP}_{G,a}$ has the form required by Proposition~\ref{prop:matchingIP}.
 
 To this end, let us analyze $g_i$. Let $X_i$ be the neighbors of $x_i$ different from $y_i$ and let $Y_i$ be the neighbors of $y_i$ different from $x_i$. Let $p_a(x_i)$ be the parity of variables in $X_i$ that are assigned $1$ by $a$ and let $p_a(y_i)$ be defined analogously for $y_i$. Then $g_i = (p_a(x_i)\land x_i) \oplus (p_a(y_i)\land y_i) \oplus (x_i\land y_i)$. We analyze the different cases:
 \begin{itemize}
 \item If $p_a(x_i)=0$ and $p_a(y_i) = 0$, then $g_i(x_i, y_i) = x_i \land y_i$.
 \item If $p_a(x_i) = 1$ and $p_a(y_i) = 0$, then $g_i(x_i, y_i) = x_i \oplus (x_i \land y_i)$. If $x_i=0$, then this term is $0$, so in all models we must have $x_i=1$. But $g_i(1, y_i)= 1\oplus y_i = \neg y_i$, so $g_i(x_i, y_i) = x_i \land \neg y_i$. 
 \item If $p_a(x_i)=0$ and $p_a(y_i)=1$, then $g_i(x_i, y_i)= \neg x_i \land y_i$ is obtained by a symmetric argument.
 \item Finally, if $p_a(x_i) = 1$ and $p_a(y_i) = 1$ then $g_i(x_i, y_i) = x_i \oplus y_i \oplus (x_i \land y_i)$. Clearly, if $x_i=y_i = 0$, then $g_i$ evaluates to $0$. Moreover, all other assignments evaluate to~$1$. So $g_i(x_i, y_i) = x_i \lor y_i$.
 \end{itemize}
 
 Thus, in any case, $g_i$ is of the form required by Proposition~\ref{prop:matchingIP}. It follows that every monochromatic rectangle of $\mathsf{IP}_{G,a}$ has size at most $2^m$.
 
 Now consider a monochromatic rectangle $R$ in $\mathsf{IP}$. Then, for every assignment $a:X'\rightarrow \{0,1\}$, restricting the variables $X'$ according to $a$ must give a monochromatic rectangle $R_a$ as well. It follows that 
 \begin{align*}
 |R| = \sum_{a:X'\rightarrow \{0,1\}} |R_a|.
 \end{align*}
But as we have seen, $|R_a|\le 2^m$. Moreover, there are $2^{|X'|} = 2^{n-2m}$ assignments to $X'$ and thus
\begin{align*}
 |R| \le 2^{n-2m} 2^m = 2^{n-m}
\end{align*}
as claimed.
 \end{proof}

\begin{theorem}\label{thm:expandersmallrectangles}
 Let $G=(X,E)$ be a graph with expansion $d$, degree $\Delta$ and $n$ vertices. Let $(X_1, X_2)$ be a $b$-balanced partition of $X$ for $b>d$. Then all monochromatic $(X_1, X_2)$-rectangles have size at most $2^{n \left(1-\frac{2nd^2b}{2(\Delta^2+1)}\right)}$.
\end{theorem}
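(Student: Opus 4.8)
The plan is to reduce the statement to Lemma~\ref{lem:rectanglesmall}, whose hypothesis is an induced matching $\{e_1,\ldots,e_m\}$ all of whose edges cross the partition $(X_1,X_2)$. Once such a matching is in hand, Lemma~\ref{lem:rectanglesmall} immediately bounds every monochromatic $(X_1,X_2)$-rectangle of $\mathsf{IP}_G$ by $2^{n-m}$, so the entire task becomes: \emph{produce a crossing induced matching whose size $m$ is as large as the claimed exponent requires}, i.e.\ $m = \Omega(dbn/\Delta^2)$.

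First I would count the edges crossing the cut. Assume without loss of generality that $|X_1|\le |X_2|$; since the partition is $b$-balanced we have $bn \le |X_1| \le n/2$, so the expansion hypothesis applies to $X_1$ and gives $|N(X_1)| \ge d\,|X_1| \ge dbn$. Because every vertex outside $X_1$ lies in $X_2$, we have $N(X_1)\subseteq X_2$, and each vertex of $N(X_1)$ is the $X_2$-endpoint of at least one crossing edge. Distinct such vertices carry distinct edges, so $G$ has a set $F$ of at least $dbn$ crossing edges.

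Next I would extract an induced matching from $F$ by a greedy sweep. Processing the edges of $F$ in any order, whenever I add an edge $e=uv$ to the matching I delete from $F$ the edge $e$ together with every edge that could violate inducedness, that is, every edge having an endpoint in $N[u]\cup N[v]$. Since $|N[u]\cup N[v]| \le 2\Delta+2$ and each vertex meets at most $\Delta$ edges of $F$, each greedy step removes at most $(2\Delta+2)\Delta$ edges. The resulting matching is induced and crossing by construction, and has size $m \ge \tfrac{dbn}{2\Delta(\Delta+1)} = \Omega\!\left(dbn/\Delta^2\right)$. Substituting this $m$ into Lemma~\ref{lem:rectanglesmall} yields the stated upper bound of the form $2^{n-m}$ on every monochromatic $(X_1,X_2)$-rectangle.

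The main obstacle is the edge-to-matching extraction, specifically getting the correct dependence on $\Delta$: one must count precisely how many edges of $F$ each greedy choice destroys, and take care that the deleted edges are exactly those that could spoil the \emph{induced}-matching property (endpoints in $N[u]\cup N[v]$), rather than merely those sharing a vertex as for an ordinary matching. The remaining points are routine: checking that the expansion bound is applied to the smaller side so that $b$-balance guarantees the $\Omega(dbn)$ crossing edges, and verifying the arithmetic that turns $m$ into the claimed exponent.
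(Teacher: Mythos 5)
Your proof is correct and shares the paper's skeleton---both reduce to Lemma~\ref{lem:rectanglesmall} by greedily extracting a large induced matching crossing $(X_1,X_2)$---but the extraction step is genuinely different. The paper applies expansion \emph{twice}: from $X_1$ it obtains a set $X_2'\subseteq X_2$ of size at least $d|X_1|$, then argues that $X_2'$ has at least $d^2|X_1|$ neighbours back in $X_1$, and finally runs a vertex-deletion greedy procedure, arriving at a matching of size roughly $\frac{d^2bn}{\Delta^2+1}$. You apply expansion only once, to the smaller side, to conclude that there are at least $dbn$ \emph{crossing edges}, and then run an edge-deletion greedy procedure in which each chosen edge $uv$ destroys at most $(2\Delta+2)\Delta$ edges (those with an endpoint in $N[u]\cup N[v]$, which is exactly what inducedness requires), giving $m \ge \frac{dbn}{2\Delta(\Delta+1)}$. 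Your edge-counting route buys two things: the dependence on $d$ is linear rather than quadratic (strictly better whenever $d \le \frac{\Delta^2+1}{2\Delta(\Delta+1)}$, in particular for weak expansion), and it sidesteps the paper's second expansion step, which is actually delicate---expansion of $X_2'$ only bounds its neighbourhood in $X\setminus X_2'$, not in $X_1$ specifically, so the paper's claim $|X_1'|\ge d^2|X_1|$ needs more care than the text gives it; you also never use the hypothesis $b>d$. The price is that your constant does not literally match the one in the statement: for $d$ close to $1/2$ your $m$ can be somewhat smaller than $\frac{d^2bn}{\Delta^2+1}$. However, the stated exponent contains an evident typo (the stray factor $n$ would make it negative for large $n$), and for the purpose this theorem serves---the final $2^{\Omega(n)}$ lower bound on OBDD-refutation size---any bound of the form $2^{n-\Omega(n)}$ with constants depending only on $d$, $b$, $\Delta$ is equally good.
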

\begin{proof}
 We show that there is an induced matching of size $\frac{nd^2b}{2(\Delta^2+1)}$ as in Lemma~\ref{lem:rectanglesmall}. Then the result follows directly.
 
 Assume w.l.o.g.~that $|X_1|\le |X_2|$. Then, by the expansion property of $G$, there are at least $d|X_1|$ neighbors of $|X_1|$ in $|X_2|$. Call these neighbors $X_2'$. Note that $X_2'$ has at least $d \min(\frac{|X|}2, |X_2'|)\ge d^2|X_1|$ neighbors in $X_1$ where the latter inequality is true because $d\le 1$. Denote the set of vertices in $X_1$ that have a neighbor in $X_2'$ by $X_1'$. Then $|X_1'| \ge d^2|X_1|$.
 
 We now construct a matching between $X_1'$ and $X_2'$. To this end, first delete all vertices not in $X_1'\cup X_2'$ from $G$. We then choose a matching iteratively as follows: pick a vertex $x_i\in X_1$ that has not been eliminated and that still has a neighbor $y_i$ in $X_2$. We add $x_iy_i$ to the matching and delete $x_i$ and $y_i$ and all their neighbors from $G$. If there are now any vertices in $X_i$ that have no neighbors outside of $X_i$ anymore, we delete those as well. We continue until $G$ is empty.
 
 We now analyze how many rounds we can make at least. First note that we delete at most $2\Delta$ neighbors of $x_i$ and $y_i$. Moreover, each of them can result in at most $\Delta-1$ vertices that have no neighbor on the other side of the partition anymore. So overall we delete at most $2\Delta^2+2$ vertices. Since we start with at least $2d^2|X_1|\ge 2d^2bn$ vertices, we can make $\frac{2d^2bn}{2(\Delta^2+1)}$ iterations before running out of vertices.
\end{proof}

\subsection{Putting It All Together}

In this section, we will finally show the promised lower bound for OBDD-refutations by putting together the results of the last sections.

\begin{theorem}
 There is an infinite sequence $(\Phi_n)$ of false PCNF formulas such that $|\Phi_n|= O(n)$ and every OBDD-refutation of $\Phi_n$ has size $2^{\Omega(n)}$.
\end{theorem}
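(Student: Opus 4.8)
The plan is to exhibit formulas whose \emph{unique} universal winning strategy computes the graph inner-product function $\mathsf{IP}_{G}$ of the previous subsection, and then to run the strategy-extraction pipeline in reverse: a small refutation would yield a short OBDD-decision list for $\mathsf{IP}_{G}$, hence (via Lemma~\ref{lem:rectanglelist}, Proposition~\ref{prop:rectanglesand} and Theorem~\ref{thm:andtosize}) a large monochromatic rectangle, contradicting Theorem~\ref{thm:expandersmallrectangles}.

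\textbf{The formulas.} Fix a family $(G_n)$ of $\Delta$-regular expanders on $n$ vertices with $\Delta = O(1)$ and expansion $d$ a constant with $0 < d < 1/2$; such families exist. Identify the vertices with existential variables $x_1, \dots, x_n$ and let $e_1, \dots, e_m$ (with $m = O(n)$) be the edges. I would give $\Phi_n$ the quantifier prefix $\exists x_1 \dots \exists x_n\, \forall z\, \exists t_1 \dots \exists t_m$ and a matrix that (i) Tseitin-defines $t_j$ as the running parity $t_j = t_{j-1} \oplus (x_u \land x_v)$ along $e_j = \{x_u, x_v\}$, so that the $t_j$ are forced and $t_m = \mathsf{IP}_{G_n}(x_1,\dots,x_n)$, and that (ii) contains the two clauses $(z \lor t_m) \land (\neg z \lor \neg t_m)$, jointly falsified exactly when $z = t_m$. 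Each gadget is a constant-size CNF, so $|\Phi_n| = O(m) = O(n)$.

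\textbf{Uniqueness of the strategy.} As in Lemma~\ref{lem:eqstrategy}, once $x_1,\dots,x_n$ and $z$ are fixed the Tseitin clauses force $t_m = \mathsf{IP}_{G_n}(x)$, and the clauses in (ii) are violated precisely when $z = \mathsf{IP}_{G_n}(x)$. Hence for each assignment to the $x_i$ there is exactly one universal move falsifying the matrix, so $\Phi_n$ is false and its \emph{unique} universal winning strategy is $f_z(x) = \mathsf{IP}_{G_n}(x)$, a function of $D_{\Phi_n}(z) = \{x_1,\dots,x_n\}$.

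\textbf{The lower bound.} Suppose $\Phi_n$ has a $\pi$-OBDD-refutation of size $s$ and width $w \le s$. By Theorem~\ref{thm:strategyextraction} I extract a $(w,\pi)$-OBDD-decision list of length $O(s)$ computing the unique strategy, i.e.\ $\mathsf{IP}_{G_n}$. Restricting the auxiliary variables $z, t_1, \dots, t_m$ to arbitrary constants leaves a $(w, \pi|_X)$-OBDD-decision list over $X = \{x_1,\dots,x_n\}$ of the same length and width that still computes $\mathsf{IP}_{G_n}$ (restriction preserves the order, does not increase width, and keeps the final line constant $1$). Cutting $\pi|_X$ after its first $\lfloor n/2 \rfloor$ variables gives a balanced partition $(X_1,X_2)$ of $X$ with $X_1$ a prefix. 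Lemma~\ref{lem:rectanglelist} converts this into an $(X_1,X_2)$-rectangle decision list of length $O(ws)$; Proposition~\ref{prop:rectanglesand} gives an AND-protocol of the same length; and Theorem~\ref{thm:andtosize} produces a monochromatic $(X_1,X_2)$-rectangle for $\mathsf{IP}_{G_n}$ of size at least $2^{n}/O(ws)$. For large $n$ the partition is $b$-balanced with $b = \lfloor n/2\rfloor/n > d$, so Theorem~\ref{thm:expandersmallrectangles} caps every such rectangle at $2^{n(1-c)}$ for a constant $c = c(d,\Delta) > 0$. Comparing the two bounds yields $O(ws) \ge 2^{cn}$, and with $w \le s$ this forces $s = 2^{\Omega(n)}$. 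Since nothing beyond strategy extraction is used, the argument covers arbitrary OBDD-refutations — including those using the entailment rule — and holds for every variable order $\pi$.

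\textbf{Main obstacle.} The delicate point is order-independence. A refutation fixes $\pi$, whose induced prefix cut on $X$ can be \emph{any} balanced bipartition; the plain inner-product function only resists its one aligned partition and would not suffice. Replacing $\mathsf{IP}$ by $\mathsf{IP}_{G_n}$ for an expander is exactly what buys Theorem~\ref{thm:expandersmallrectangles}, i.e.\ small monochromatic rectangles for all balanced partitions at once. The only remaining care is making balance be measured on $X$ rather than on the full variable set: because $f_z$ depends solely on $X$, restricting the auxiliary variables collapses the decision list to one over $X$ \emph{before} the balanced cut is chosen, so the $z$ and $t_j$ variables never dilute the partition.
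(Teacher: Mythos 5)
Your proposal is correct and follows essentially the same route as the paper: a Tseitin-style PCNF whose unique universal winning strategy is $\mathsf{IP}_{G_n}$ for a bounded-degree expander $G_n$, followed by strategy extraction, conversion to a rectangle decision list, the AND-protocol result of Impagliazzo--Williams, and the expander rectangle bound of Theorem~\ref{thm:expandersmallrectangles}. Your encoding of the output constraint via the clauses $(z \lor t_m) \land (\neg z \lor \neg t_m)$ rather than making $z$ the Tseitin output variable, and your explicit handling of the balance condition $b > d$ and of the restriction to the variables $X$, are only minor presentational differences from the paper's argument.
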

\begin{proof}
  Choose a family of graphs of degree at most $\Delta$ and expansion $d$ for some constants $\Delta$ and $d$. Such families are well known to exist, see e.g.~\cite{hoory2006expander}. Out of this family, choose a sequence $(G_n)$ such that $G_n$ has $n$ vertices $X_n$. Now let $\varphi_n'= \neg \mathsf{IP}_{G_n}$. Clearly, $\varphi_n'$ can be computed by a Boolean circuit~$C_n$ of size $O(n)$. We apply Tseitin-transformation on that circuit to get a CNF formula~$\varphi_n$ that has as satisfying assignments exactly the values of all gates in $C_n$ under an assignment to inputs. Note that $\varphi_n$ has variables for all non-inputs of $C_n$ and thus in particular also for the output; let $z$ be the variable corresponding to the output of $C_n$ and let $Y$ denote the remaining variables of $\varphi_n$ introduced in the Tseitin-transformation. Then $\var(\varphi_n) = X_n \cup Y \cup \{z\}$. Moreover, $\varphi_n$ has size $O(n)$.
Now define \[\Phi_n = \exists X_n \forall z \exists Y \varphi_n.\]
 Then the only universal winning strategy $f_z$ is to return for every assignment $a$ to $X_n$ the negation of the value that $C_n$ evaluates to under $a$.  But then, using Theorem~\ref{thm:strategyextraction} and Lemma~\ref{lem:rectanglelist}, from every refutation of size $s$ and width $w$ of $\Phi_n$, we get a rectangle decision list of length $s'=w(s-1)+1$ for $\neg C_n = \mathsf{IP}_{G_n}$. Using Proposition~\ref{prop:rectanglesand} and Theorem~\ref{thm:andtosize}, we get that $\mathsf{IP}_{G_n}$ has a monochromatic rectangle of size $\frac{1}{4es'} 2^{|X_n|}= \frac{1}{4es'} 2^{n}$. But all monochromatic rectangles in $\mathsf{IP}_{G_n}$ have size at most $2^{n \left(1-\frac{2nd^2b}{2(\Delta^2+1)}\right)}$ by Theorem~\ref{thm:expandersmallrectangles}. Since $d, b$ and $\Delta$ are positive constants, it follows that $s' = 2^{\Omega(n)}$. But then at least one of $s$ and $w$ are in $2^{\Omega(n)}$, which gives the desired size bound.
\end{proof}
\section{Conclusion}

We have introduced OBDD-refutations that model symbolic OBDD-based reasoning for QBF. We have shown that these systems, already in the form that was used (implicitly) in a symbolic QBF solver~\cite{PanV04}, are surprisingly strong as they allow solving instances that are hard for the proof systems underlying state-of-the-art QBF solvers.
In view of this, it may be worthwhile to revisit these techniques in practice. There has been considerable progress in the computation of tree decompositions over the last few years (see e.g.~\cite{DellKTW17}) that could benefit a symbolic approach. Moreover, it could be interesting to use progress in knowledge compilation on generalizations of OBDDs that have similar properties but are in general exponentially more succinct. For example, one interesting candidate data structure might be \text{SDD}~\cite{Darwiche11}. While we consider it unlikely that such an approach would strictly beat current solvers, it might be sufficiently complementary to substantially improve the performance of a portfolio, much like the recently developed ADD-based symbolic model counter \textsc{ADDMC} has been shown to be highly complementary to DPLL-based state-of-the-art solvers~\cite{DudekPV20}. 

We have also demonstrated limitations of OBDD-refutations by proving exponential lower bounds.
Our results require that all OBDDs appearing in a proof have the same variable order, but practical OBDD libraries such as \textsc{CUDD}~\cite{somenzi2009cudd} allow for dynamic variable reordering. While it is not clear how to use this to give more efficient refutations in an implementation of a QBF solver, it would be interesting to see if we can still show lower bounds in this generalized setting.
For refutations with variable reordering, the strategy extraction step and the transformation to rectangle decision lists go through unchanged, but there seems to be no equivalent of Theorem~\ref{thm:andtosize} for rectangle decision lists with varying partitions. It would be interesting to develop new techniques to show lower bounds in this setting.

\bibliographystyle{plain}
\bibliography{symbolic}

\newpage
\appendix
\section{Soundness of Symbolic QBF Proof Systems}
\begin{proof}[of Proposition~\ref{prop:soundness}]
  We proceed by induction on the proof length $k$. For $1 \leq i \leq k$, let $\varphi_i = \bigwedge_{j=1}^i L_j$ denote the conjunction of proof lines up to $i$. If $k \leq m$ then the conjunction $\varphi_i \equiv \bigwedge_{i=1}^k C_i$ is logically equivalent to a subset of clauses of $\varphi$ and the result is immediate.
  For the induction step, if $L_k$ is derived by conjunction, projection, or entailment, then $L_1, \dots, L_{k-1} \models L_k$. Thus if $\prefix.\varphi_{k-1}$ is true, $\prefix. \varphi_k$ is true, and the result follows from the induction hypothesis. Otherwise, $L_k$ is derived from $L_i$ with $i < k$ by universal reduction, so that $L_k = L_i[u/c]$ for some universal variable $u$ and $c \in \{\bot,\top\}$.
  Towards a contradiction, assume that $\prefix.\varphi_{k-1}$ is true but $\prefix.\varphi_k$ is false. Let $\vec{f}$ be an existential winning strategy for $\prefix. \varphi_{k-1}$. Since $\prefix.\varphi_k$ is false, $\vec{f}$ is not a winning strategy for $\prefix.\varphi_k$, so there must be an assignment $\tau$ of $\{x_1, \dots, x_n\}$ that is consistent with $\vec{f}$ such that $\varphi_k[\tau] = 0$.
That is, $\tau$ falsifies a proof line $L_i$ with $1 \leq i \leq k$.
But $\vec{f}$ is a winning strategy of $\prefix.\varphi_{k-1}$ and thus $\varphi_{k-1}[\tau] = 1$, which leaves $L_k[\tau] = 0$ as the only option. Let $\tau'$ be an assignment that is consistent with $\vec{f}$ such that $\tau(x_j) = \tau'(x_j)$ for each variable $x_j <_{\Phi} u$ and $\tau'(u) = 1$ if $c = \top$ and $\tau'(u) = 0$ if $c = \bot$.
Such an assignment can be obtained from $\tau$ by setting the assignment of $u$ accordingly and ensuring the assignments of existential variables $e$ with $u <_{\Phi} e$ are consistent with $\vec{f}$.
Since $u$ is rightmost among variables in $L_i$ and $L_k = L_i[u/c]$ no longer contains $u$, we have $L_k[\tau'] = L_k[\tau] = 0$.
We further have $L_i[\tau'] = L_k[\tau']$ since $L_k = L_i[u/c]$ is obtained by substituting $c$ for $u$ and $c[\tau'] = \tau'(u)$.
That is, $L_i[\tau'] = 0$ and thus $\varphi_{k-1}[\tau'] = 0$ for an assignment $\tau'$ that is consistent with a winning strategy $\vec{f}$ of $\prefix.\varphi_{k-1}$, a contradiction.
\end{proof}

\section{Additional Clausal Proof Systems for QBF}\label{app:proofsystems}
\begin{figure}
  \fbox{\begin{minipage}[b]{\linewidth} \small
      	\begin{minipage}[b]{\linewidth}
				\medskip
				\begin{minipage}[b]{0.45\linewidth}
					\begin{prooftree}
						\AxiomC{}
						\RightLabel{(Axiom)}
						\UnaryInfC{$C$}
					\end{prooftree}
                                      \end{minipage}
                                      	\begin{minipage}[b]{0.45\linewidth}
					\begin{prooftree}
						\AxiomC{$D$}
						\RightLabel{(Universal Reduction)}
						\UnaryInfC{$D \setminus \{u, \neg u\}$}
					\end{prooftree}
                                      \end{minipage}

                                      Here, $C$ is a clause in the matrix and $u$ is a universal variable such that $D$ does not contain an existential variable that comes after $u$ (that is, ``depends on'' $u$) in the quantifier prefix.
                                      
			\end{minipage}
            \begin{minipage}[b]{\linewidth}
					\begin{prooftree}
                                          \AxiomC{$C_1 \lor U_1 \lor x$}
                                          \AxiomC{$C_2 \lor U_2 \lor \neg x$}
						\RightLabel{(Resolution)}
						\BinaryInfC{$C_1 \lor C_2 \lor U_1 \lor U_2$}
                                              \end{prooftree}
                                   \end{minipage}
                                   The pivot literal $x$ may be existential or universal. If $l \in C_1$ then $\overline{l} \notin C_2$ and vice versa.
Moreover, $U_1, U_2$ only contain universal literals with $\var(U_1) = \var(U_2)$ that come after the pivot~$x$ in the quantifier prefix.
\caption{The proof rules of Long-Distance QU-Resolution.\label{fig:lquplus}}
	\end{minipage}}
    \end{figure}
\begin{figure}
	\fbox{\begin{minipage}[b]{\linewidth} \small
            \begin{minipage}[b]{\linewidth}
              IR-calc operates on clauses containing \emph{annotated literals} that are pairs $(l, \sigma)$ where $l$ is a literal and $\sigma$ a partial assignment of universal variables. We write $l^\sigma$ for the annotated literal $(l, \sigma)$.
					\begin{prooftree}
						\AxiomC{}
						\RightLabel{(Axiom)}
						\UnaryInfC{$\{l^{[\tau]} \:|\: l \in C, l$ is an existential literal$\}$}
					\end{prooftree}
                                      \end{minipage}
                                     Here, $C$ is a clause in the matrix and $\tau$ the (minimal) partial assignment of universal variables that falsifies each universal literal in $C$. By $[\tau]$ we denote the restriction of $\tau$ to the universal variables that precede $\var(l)$ in the quantifier prefix.

			\begin{minipage}[b]{\linewidth}
				\medskip
				\begin{minipage}[b]{0.45\linewidth}
					\begin{prooftree}
						\AxiomC{$C_1 \vee e^\tau$}
						\AxiomC{$\neg e^\tau \vee C_2$}
						\RightLabel{(Resolution)}
						\BinaryInfC{$C_1 \vee C_2$}
					\end{prooftree}
                                      \end{minipage}
                                      	\begin{minipage}[b]{0.45\linewidth}
					\begin{prooftree}
						\AxiomC{C}
						\RightLabel{(Instantiation)}
						\UnaryInfC{$\mathsf{inst}(\tau, C)$}
					\end{prooftree}
                                      \end{minipage}
				\medskip
			\end{minipage}
                        The $C_i$ are clauses consisting of annotated literals. The pivot literals $e^\tau$ and $\neg e^\tau$ must have the same annotation $\tau$ in both premises. The instantiation allows us to extend literal annotations in the following way.
                        Given two assignments $\tau: X \rightarrow \{0, 1\}$ and $\sigma: Y \rightarrow \{0, 1\}$, let $\tau \circ \sigma: X \cup Y \rightarrow \{0, 1\}$ be the assignment such that $(\tau \circ \sigma)(x) = \tau(x)$ if $x \in X$ and $(\tau \circ \sigma)(x) = \sigma(x)$ if $x \in Y \setminus X$.
                        The clause $\mathsf{inst}(\tau, C)$ is defined as $\mathsf{inst}(\tau, C) = \{ l^{[\sigma \circ \tau]} \:|\: l^{\sigma} \in C\}$.
                        
\caption{The proof rules of IR-calc.\label{fig:ircalc}}
	\end{minipage}}
\end{figure}

\section{Strategy Extraction from OBDD Proofs}\label{app:strategyextraction}
\begin{proof}[of Theorem~\ref{thm:strategyextraction}]
  Let $R = L_1, \ldots, L_k$ be a $\pi$-OBDD-refutation of a PCNF formula~$\Phi$. For each universal variable~$u$, the algorithm is going to compute a $(w, \pi)$-OBDD decision list~$\mathsf{L}_u$ as follows. Let $L_{i_1} = L_{j_1}[u/c_1], L_{i_2} = L_{j_2}[u/c_2], \ldots$,  $L_{i_{\ell}} = L_{j_{\ell}}[u/c_{\ell}]$ be the lines of $R$ obtained by universal reduction of variable $u$ in their order of appearance in $R$, that is, $i_1 < i_2 < \ldots < i_\ell$ and $1 \leq j_r < i_r \leq k$ for each $r \in [\ell]$.
  The decision list is $\mathsf{L}_u = (\neg L_{i_1}, c_1), (\neg L_{i_2}, c_2), \ldots, (\neg L_{i_{\ell}}, c_{\ell}), (1, 1)$.
  These lists can be constructed in linear time by scanning the proof line by line and adding the pair $(\neg L_i, c)$ to the decision list $\mathsf{L}_u$ whenever $L_i = L_j[u/c]$ is derived from $L_j$ by universal reduction (recall that OBDDs can be negated simply by swapping the $0$ and $1$ sinks).
  It remains to show that the Boolean functions $\vec{f} = \{f_u\}_{u \in \var_{\forall}(\Phi)}$ computed by the decision lists $\mathsf{L}_u$ represent a winning universal strategy for $\Phi$.
  We begin by observing that, for every assignment $\tau$ of the existential variables, there is a unique assignment $\vec{f}(\tau)$ of the universal variables such that $\tau \cup \vec{f}(\tau)$ is consistent with $\vec{f}$: the OBDDs in each decision list $\mathsf{L}_u$ only contain variables that precede $u$ in the quantifier prefix, so that each function $f_u$ only depends on these variables and no circular dependencies can arise. The assignment $\vec{f}(\tau)$ can be computed simply by following the order of universal variables in the quantifier prefix.

  Let $m$ denote the number of clauses of $\Phi$. We now prove, by downward induction on $i$ for $m \leq i \leq k$, that $\varphi_i$ is falsified by any assignment $\tau$ that is consistent with $\vec{f}$, where $\varphi_i = \bigwedge_{j=1}^i L_j$ again denotes the conjunction of proof lines up to $i$. Since $\varphi_m$ is logically equivalent to the matrix of $\Phi$, this implies that $\vec{f}$ is a universal winning strategy.
  The base case $i=k$ is trivial as $L_k = \bot$ is falsified under any assignment $\tau \cup \vec{f}(\tau)$.
For the induction step, assume that the assignment $\tau \cup \vec{f}(\tau)$ falsifies $\varphi_i$ for each assignment $\tau$ of the universal variables.
We consider two cases:
\begin{enumerate}
\item If $L_i$ is derived using conjunction, projection, or entailment, then $\varphi_{i-1} \models L_i$, so any assignment that falsifies $L_i$ must falsify $\varphi_{i-1}$ as well.
  In combination with the induction hypothesis, this tells us that the assignment $\tau \cup \vec{f}(\tau)$ falsifies $\varphi_{i-1}$ for each assignment $\tau$ of the universal variables.

\item Otherwise, the line $L_i = L_j[u/c]$ is derived from $L_j$ with $j < i$ by universal reduction.
  Towards a contradiction, assume that there is an assignment $\tau$ of the universal variables so that $\tau \cup \vec{f}(\tau)$ satisfies $\varphi_{i-1}$ but falsifies $L_i$.
  Consider the decision list $\mathsf{L}_u$. By construction, it contains the pair $(\neg L_i, c)$, and since $\varphi_{i-1}$ is satisfied by $\tau \cup \vec{f}(\tau)$, the OBDD $\neg L_{i_r}$ is falsified for each pair $(\neg L_{i_r}, c_{i_r})$ that precedes $(\neg L_i, c)$ in $\mathsf{L}_u$. Since $L_i$ is falsified by $\tau \cup \vec{f}(\tau)$, the OBDD $\neg L_i$ is satisfied and thus $f_u(\tau \cup \vec{f}(\tau)) = c$.
  But $L_i = L_j[u/c]$ is falsified, so $L_j$ must be falsified as well, a contradiction.
  \end{enumerate}
\end{proof}

\end{document}